\newcommand\todo{\textcolor{red}}
\newcommand{\zerodisplayskips}{%
  \setlength{\abovedisplayskip}{4pt} 
  \setlength{\belowdisplayskip}{4pt}
  \setlength{\abovedisplayshortskip}{4pt}
  \setlength{\belowdisplayshortskip}{4pt}}
\appto{\normalsize}{\zerodisplayskips}
\appto{\small}{\zerodisplayskips}
\appto{\footnotesize}{\zerodisplayskips}
\let\oldenumerate\enumerate
\renewcommand{\enumerate}{
  \vspace{-0.7\topsep} 
  \oldenumerate
  \setlength{\itemsep}{1pt}
  \setlength{\parskip}{0pt}
  \setlength{\parsep}{0pt}
  \setlength{\topsep}{0pt}
  \setlength{\partopsep}{0pt}
}
\let\olditemize\itemize
\renewcommand{\itemize}{
  \vspace{-0.7\topsep}
  \olditemize
  \setlength{\itemsep}{1pt}
  \setlength{\parskip}{0pt}
  \setlength{\parsep}{0pt}
  \setlength{\topsep}{0pt}
  \setlength{\partopsep}{0pt}
}
\newcommand*{\bfrac}[2]{\genfrac{}{}{0pt}{}{#1}{#2}}
\DeclareMathOperator*{\argmin}{arg\,min}
\DeclareMathOperator*{\argmax}{arg\,max}
\DeclareMathOperator*{\average}{mean}
\newcommand{\relevance}{\textsc{Rel}}
\newcommand{\entityrelevance}{\textsc{ERel}}
\newcommand{\daterelevance}{\textsc{DRel}}
\newcommand{\temporalconstraints}{\textsc{Constraints}}
\newcommand{\EtoE}{\textsc{E2E}}
\newcommand{\EtoEPath}{\textsc{E2EPath}}
\newcommand{\EtoECooc}{\textsc{E2ECooc}}
\newcommand{\EtoD}{\textsc{E2D}}
\newcommand{\EtoDPath}{\textsc{E2DPath}}
\newcommand{\EtoDCooc}{\textsc{E2DCooc}}
\newcommand{\GtoE}{\textsc{G2E}}
\newcommand{\globalimportance}{\textsc{GlobalImportance}}
\newcommand{\sub}{s}
\newcommand{\pathtore}[1]{\mbox{$\pi_{re}(#1)$}}
\newcommand{\pathtots}[1]{\mbox{$\pi_{t}(#1)$}}
\newcommand{\reFun}{\textsc{RE}}
\newcommand{\subFun}{\textsc{Sub}}
\newcommand{\tFun}{\text{$\tau$}}
\newtheorem{theorem}{Theorem}[section]
\newtheorem{lemma}[theorem]{Lemma}
\theoremstyle{remark}
\theoremstyle{definition}
\newtheorem{definition}{Definition}[section]
\newenvironment{proof-idea}{\noindent{\it Proof Sketch.}\hspace*{0.1em}}{\qed\bigskip\\}
\newtheorem*{rep@theorem}{\rep@title}
\newcommand{\newreptheorem}[2]{%
\newenvironment{rep#1}[1]{%
 \def\rep@title{#2 \ref{##1}}%
 \begin{rep@theorem}}%
 {\end{rep@theorem}}}
\DeclarePairedDelimiter\floor{\lfloor}{\rfloor}
\newcommand{\sectionrule}{\addlinespace[0.5ex]}
\newcommand{\eat}[1]{}
\newcommand{\etc}{{\em etc.}}
\newcommand{\cf}{{\em cf.}}
\newcommand{\hide}[1]{}
\newcommand{\xhdr}[1]{\vspace{1mm}\noindent{{\bf #1.}}}
\newcommand{\eg}{\emph{e.g.}}
\newcommand{\ie}{\emph{i.e.}}
\newcommand{\systemname}{{\sc TimeMachine}}
\newfont{\mycrnotice}{ptmr8t at 7pt}
\newfont{\myconfname}{ptmri8t at 7pt}
\begin{document}

\title{TimeMachine:\\ Timeline Generation for Knowledge-Base Entities}

\author{
\alignauthor
Tim Althoff*,
Xin Luna Dong\textsuperscript{\textdagger},
Kevin Murphy\textsuperscript{\textdagger},
Safa Alai\textsuperscript{\textdagger},
Van Dang\textsuperscript{\textdagger},
Wei Zhang\textsuperscript{\textdagger}\\
\vspace{2mm}
\affaddr{*Computer Science Department, Stanford University, Stanford, CA 94305}\\
\affaddr{\textsuperscript{\textdagger}Google, 1600 Amphitheatre Parkway, Mountain View, CA 94043}\\
\vspace{1mm}
\email{*althoff@cs.stanford.edu \hspace{1mm} \textsuperscript{\textdagger}\{lunadong, kpmurphy, safa, vandang, weizh\}@google.com} 
}



\maketitle

\begin{abstract}

We present a method called 
\systemname\
to generate a timeline of events and relations for entities in a knowledge base. 
For example for an actor, such a timeline should show the most important
professional and personal milestones and relationships such as works, awards, collaborations, and family relationships.
We develop three orthogonal timeline quality criteria that an ideal timeline should satisfy:
(1) it shows events that are {\em relevant} to the entity;
(2) it shows events that are {\em temporally diverse}, 
so they distribute along the time axis, avoiding visual crowding
and allowing for easy user interaction,
such as zooming in and out;
and
(3) it shows events that are {\em content diverse},
so they contain many different types of events ({\em e.g.}, for an actor,
it should show movies and marriages and awards, not just movies).
We present an algorithm to generate such timelines for a given time period and screen size,
based on submodular optimization and web-co-occurrence statistics
with provable performance guarantees.
A series of user studies using Mechanical Turk shows 
that all three quality criteria are crucial to produce quality timelines 
and that our algorithm significantly
outperforms various baseline and state-of-the-art methods.

\end{abstract}



\vspace{1mm}
\noindent {\bf Categories and Subject Descriptors:} H.2.8 {\bf
[Database Management]}: Database applications---{\it Data mining}

\noindent {\bf General Terms:} Algorithms, Experimentation. 

\noindent {\bf Keywords:} Summarization, Timeline, Knowledge Base, Submodular Optimization.

\vspace{-0.5\baselineskip}
\section{Introduction}
\label{sec:introduction}


\begin{figure*}[t]
	\centering
  \includegraphics[width=0.8\linewidth]{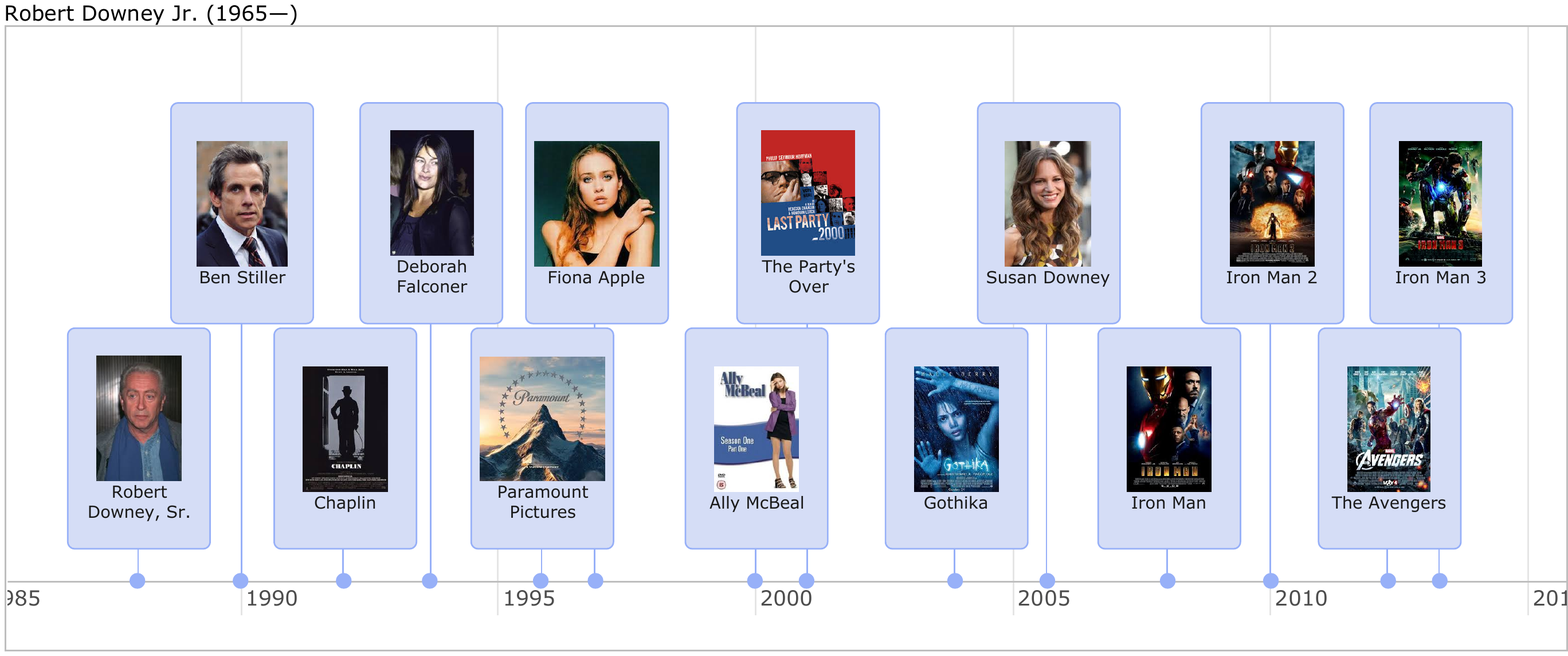}
 \vspace{-4mm}
	\caption{An example timeline for Robert Downey Jr. (American actor) generated by our proposed approach.
	 Note that the timeline is interactive and displays
         explanations for each event on hover
(see Figure~\ref{fig:timeline_rdjr_zoom}).
Furthermore, it can be dynamically zoomed.
}
	\label{fig:timeline_rdjr_full}
\end{figure*}




As the web and other technological advancements continue to bring down barriers for creation and distribution of information, 
relevant information is often buried in an avalanche of data, and locating it has become increasingly difficult~\cite{shahaf2012metro}.
Search engines have attempted to address this challenge~\cite{modern-ir}, 
but the volume and diversity of results can still be overwhelming,
even for simple entity queries~\cite{shahaf2013information}.
In many such cases, for instance when searching for a person or organization, 
an overview
of the most important events in an organized and readable format 
would serve users better,
ideally with interactive
features to enable further exploration.
A {\em timeline} with clickable key events arranged 
along a horizontal time axis would serve this need~\cite{tuan2011cate}.

Automatically generating timelines is very challenging.
To be specific, consider creating a timeline for the American actor Robert Downey Jr.
There are hundreds of possible candidate events and it is infeasible to display all of them.
Robert Downey Jr. is best known for his starring roles in the movies {\em Iron Man} and {\em Avengers}, 
but even for a single movie there are dozens of related events to display (production, release dates, opening, and award ceremonies).
In fact, one should not only focus on movies but provide a more holistic overview of his life and career.
This could include showing various family relationships 
(\eg, father Robert Downey Sr., ex-wife Deborah Falconer, or wife Susan Downey), 
important acting roles for his career (the movie {\em Chaplin} and TV
show {\em Ally McBeal}), 
and other notable works and professional relationships.
However, note that events might be related as well --- if one includes a movie award one might not want to display its release date separately but rather show a more diverse event instead.
Lastly, the timeline should be interactive to enable further exploration. 

Knowledge bases (KB) of timestamped facts such as 
Freebase~\cite{Bollacker08} 
or YAGO~\cite{suchanek2007yago}
have been used as the source of event information 
(in this paper we use Freebase).
Previous work has introduced timeline generation from KBs
through visualizing entity-level co-occurrence in news corpora~\cite{mazeika2011entity}, 
displaying events associated with an entity in YAGO~\cite{wang2010timely}, 
and generating context-aware timelines from Wikipedia~\cite{tuan2011cate}.
However, these works did not address the problem of selecting a subset of events but instead displayed all events~\cite{mazeika2011entity,wang2010timely},
or have used a static global ranking that does not capture dependencies between events and is therefore unable to encourage diversity~\cite{tuan2011cate}.
Furthermore, this existing work has not considered challenges raised by enabling user interaction
nor provided an empirical evaluation of the quality of the generated timelines.

\xhdr{Present work} In this paper, we develop an approach called 
\systemname\
to generate a timeline for a given entity of interest.
We develop three orthogonal timeline quality criteria:

\begin{enumerate}
  \item Relevance: Display only the most ``interesting'' or
          ``relevant'' events in an entity's    history.

  \item Temporal Diversity: Distribute events evenly along the
          temporal axis, to avoid visual crowding, and to allow easy
          interaction with the depicted events.

  \item Content Diversity: Display a diverse set of event types
(\eg, for an actor, do not only list the movies they have been in).

\end{enumerate}
Consequently, we propose a principled solution to timeline generation according to these criteria based on submodular optimization,
for which we both provide theoretical performance guarantees
and show empirical evidence of significant improvement over baseline
and state-of-the-art methods.

In Figure~\ref{fig:timeline_rdjr_full}, we show that our approach successfully
generates a timeline of relevant events that is diverse both in terms of content and time.
This timeline is also interactive in three ways.
First, when the user hovers over an
image,
we show various details, such as 
 ``Robert Downey Jr. starred in {\em The Avengers}, released on
 April 11, 2012''.
Second, the user can specify a particular
    time period, and a new timeline for that period will replace the current one.
    For example, Figure~\ref{fig:timeline_rdjr_zoom} shows the timeline
    for Robert Downey Jr. from 2007 until 2014 
    and gives an example event description.
The zoomed-in version focuses attention on more recent events,
such as his award for {\em Tropic Thunder}
and his role in the {\em Sherlock Holmes} movies.
Finally, the user can click on an entity icon, such as Susan Downey,
    and a timeline for this entity will be displayed.

Our approach involves the following two main steps, which are sketched in Figure~\ref{fig:system_overview}.
First, given a subject entity of interest,
we generate as many candidate events as possible
by searching for neighboring entities with timestamps in the given knowledge base (Section~\ref{sec:generation}).
We generate candidate events for all possibly interesting subjects offline.
%
Second, given a set of candidates and a time period of interest, we
select (online)
a diverse subset of the most relevant events subject to temporal diversity or layout constraints (Section~\ref{sec:selection}).
To do this, we maximize a submodular objective using various
relevance signals based on web co-occurrence, 
subject to these layout constraints.
We prove that our greedy algorithm for optimization
yields close-to-optimal solutions.
In addition, 
our algorithm allows for fast dynamic updates of the timeline based on
user interaction (zooming in or out).

We evaluate our proposed algorithm through a series of user studies with 1154 raters and compare it to
various simpler baselines and state-of-the-art approaches (Section~\ref{sec:evaluation}).
Our experiments show that users always significantly prefer our proposed method (60-91\% of timeline comparisons).
Further, we demonstrate that enforcing temporal diversity and content diversity significantly improves the results.

In summary, our main contributions are as follows:
\begin{enumerate}
  \item A design of a timeline search engine that efficiently 
        supports various types of user interaction.

  \item An algorithm for generating entity
          timelines based on submodular optimization
          and web-co-occurrence scores.

  \item An extensive user study of the relative importance of
          different signals for determining entity relevance
          and different notions of diversity.
\end{enumerate}
\vspace{-1.0\baselineskip}

\begin{figure}[t]
  \centering
  \includegraphics[width=.9\linewidth]{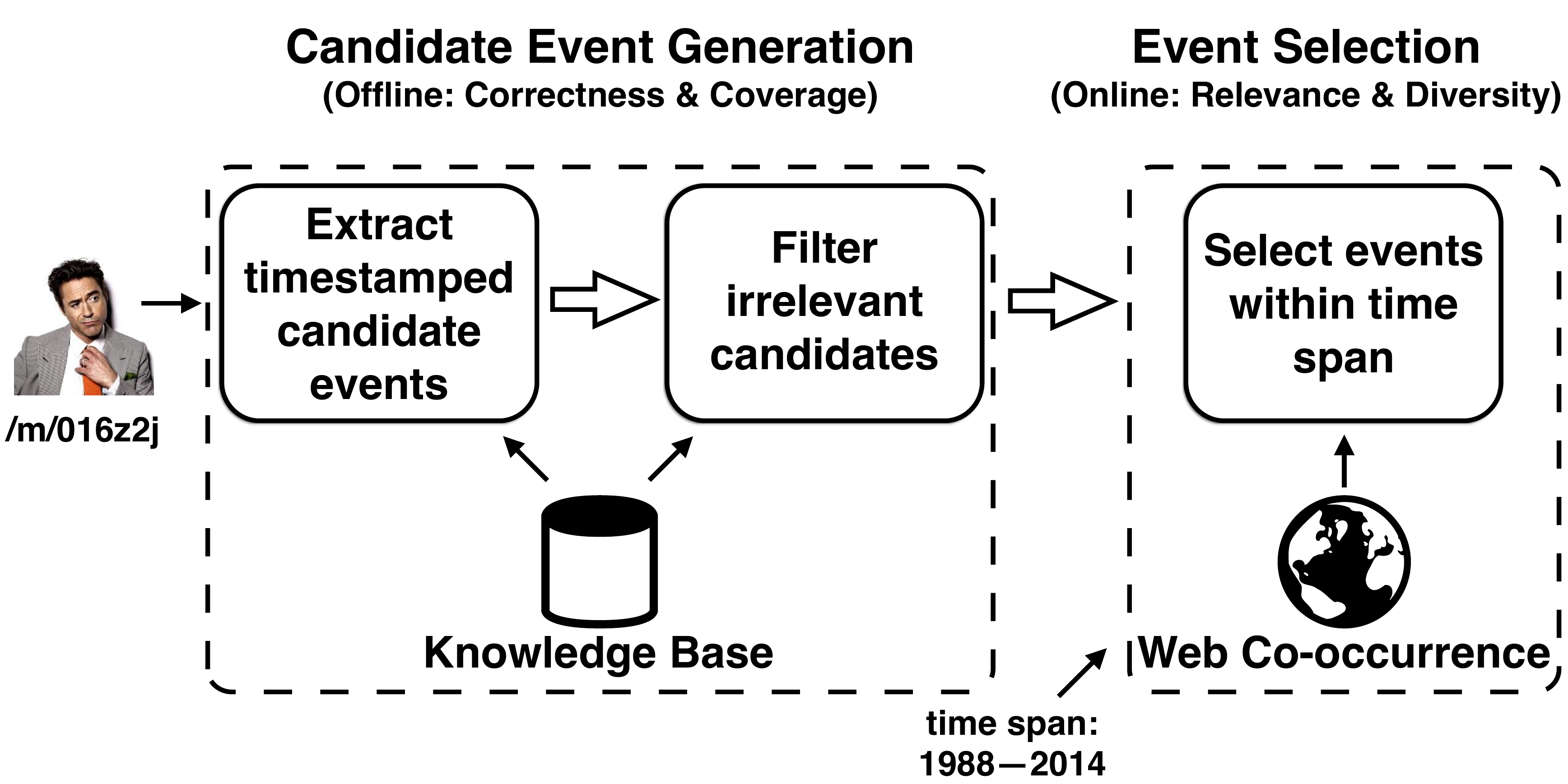}
 \vspace{-4mm}
  \caption{System architecture.
    {\sc TimeMachine} traverses the KB offline to generate candidate events for 
    a subject of interest (\eg, Robert Downey Jr.).
    At run time, the user specifies a time period of interest 
    and {\sc TimeMachine} selects a subset of events from the candidates to generate the timeline.
    }
  \label{fig:system_overview}
\end{figure}

\section{Event Candidate Generation}
\label{sec:generation}

Recall from Figure~\ref{fig:system_overview} that there are two main steps: 
candidate event generation and event selection.
In this section, we describe how we generate candidate events
given a subject of interest. 
Our approach is to generate a large set of events, and then
to filter out ``irrelevant'' ones. We give an evaluation of this
filtering step. This is a necessary preparation step for 
our key contribution in this paper (described in the following section):
dynamically selecting a subset of the remaining events at runtime,
depending on the time span of interest and the available screen real estate.


\subsection{Event Generation}
\label{subsec:event_generation}

\begin{figure}[t]
	\centering
	\includegraphics[width=0.8\linewidth]{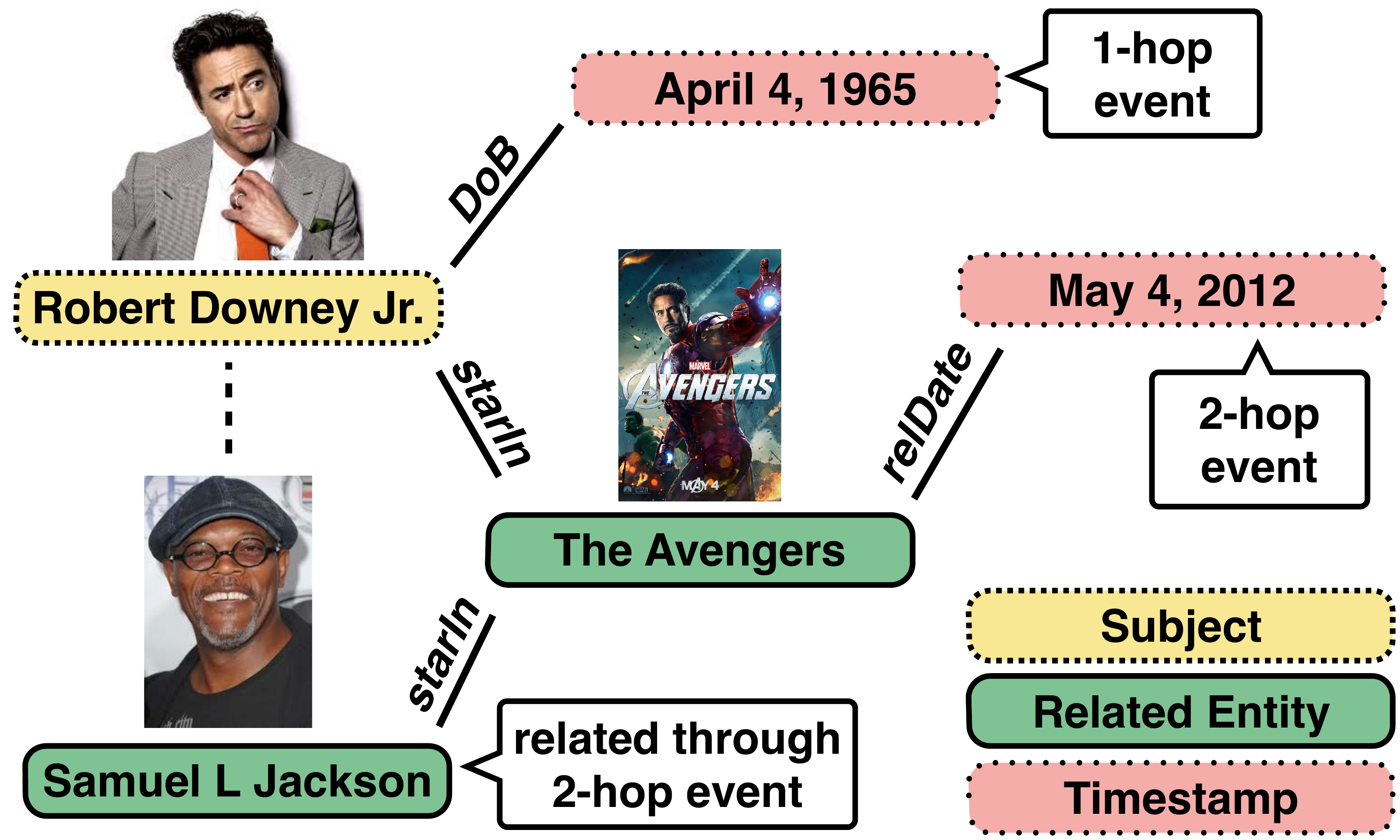}
   \vspace{-1mm}
	\caption{An illustration of the event candidate generation step.
  Events are short paths that are associated with a timestamp.
}
	\label{fig:event_generation}
\end{figure}

\begin{figure*}[t]
  \centering
    \includegraphics[width=0.8\linewidth]{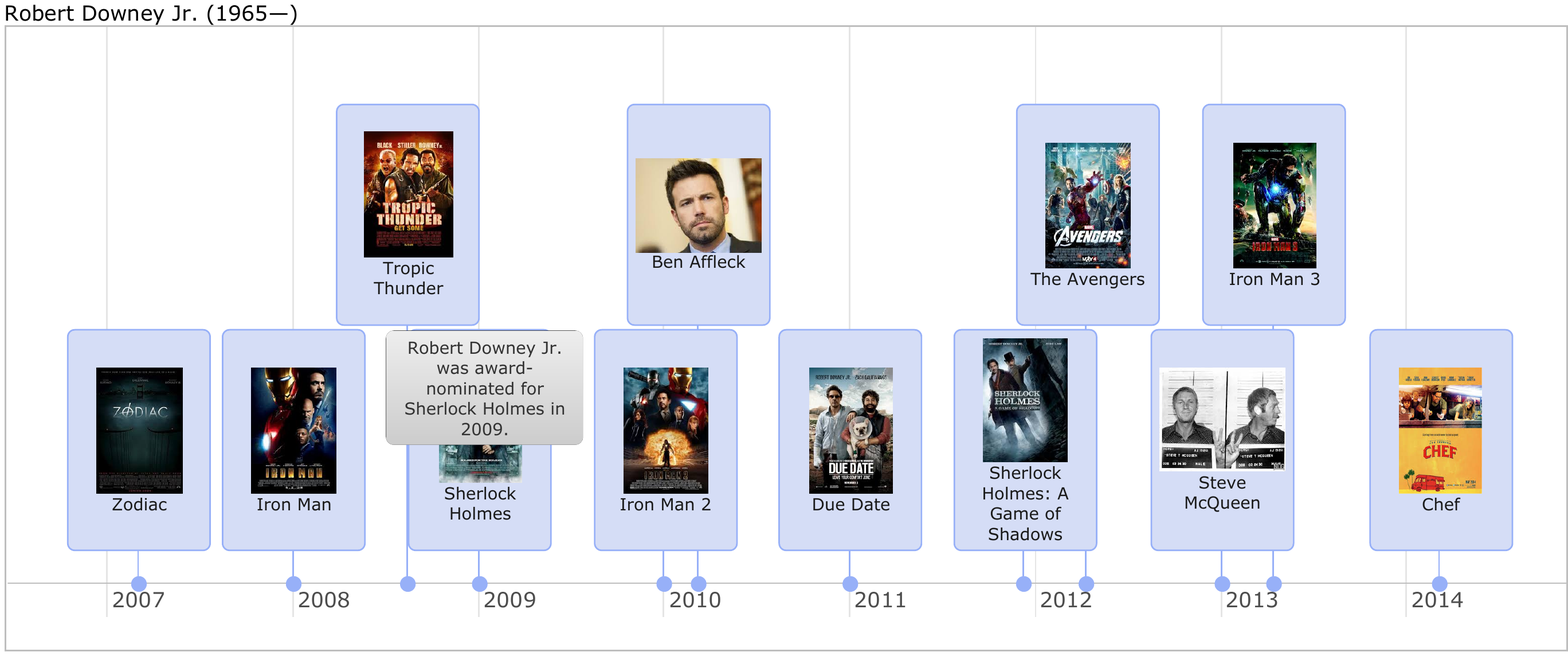}
 \vspace{-4mm}
  \caption{An example timeline for Robert Downey Jr.,
where we have zoomed in on the most recent years of his life.
Note the description for the Sherlock Holmes event (award nomination) that gets displayed on hover.
}
  \label{fig:timeline_rdjr_zoom}
\end{figure*}

We can consider the KB as a graph with nodes representing subjects and objects,
and edges representing the relationship (predicate) between the nodes.
Given a particular subject represented by a node $N_s$ in the KB, we are interested in nodes
that are connected to $N_s$ through some paths and are associated with
a timestamp; we call such paths ``events''.
As we discuss below, we consider two kinds of events: simple and
compound.
Figure~\ref{fig:event_generation} illustrates the overall process for Robert Downey Jr.

\xhdr{Simple events}
Simple events are nodes with timestamps that can be reached by paths of length one or two starting at $N_s$.
In the example in Figure~\ref{fig:event_generation}, we
can traverse along an edge of type {\em date-of-birth},
and reach a node representing the corresponding date; this is a path
of length one (1-hop event).
We can also traverse along an edge of type {\em starred-in}, and reach
the node representing the movie {\em The Avengers}. To find the corresponding date, we traverse
along a second edge of type {\em release-date} and reach a node with
the release date of the movie. This is a path of length two.

We can formally represent a simple event (derived from a path of
length two) as follows:
 $\sub
 \overset{p_1}{\longrightarrow}re_1
 \overset{p_2}{\longrightarrow} t$,
where  $\sub$ is the subject,
$re_1$ is a related entity (such as {\em The Avengers}),
$t$ is the timestamp, and $p_1$ and $p_2$ are the predicates along the path.
For simplicity, we represent 1-hop events in a similar way, by
introducing a self-loop through
 $p_1 = \mathit{self}$ and $re_1 = \sub$.
For each event $e$, we define the {\em subject} of the event as $\subFun(e) = \sub$, 
the {\em related entity} as $\reFun(e) = re_1$,
the {\em timestamp} as $\tFun(e) = t$, 
the {\em entity path} as $\pathtore{e} = p_1$,
and the {\em time path} as 
$\pathtots{e} = p_1.p_2$.



\xhdr{Compound events}
Extracting only simple events will
miss out on some implicit connections to other related entities.
For example, consider the collaboration
between Robert Downey Jr. and Samuel L Jackson in the movie {\em The Avengers} shown
in Figure~\ref{fig:event_generation}.
Even though there is no direct edge between Robert Downey Jr. and Samuel L Jackson, 
they are connected through a path as they starred in the same movie.

We can discover such connections as follows.
Suppose
we have simple events $\sub_1 \overset{p_1}{\longrightarrow}re_1 \overset{p_2}{\longrightarrow} t$ 
and
$s_2 \overset{p_3}{\longrightarrow}re_1 \overset{p_2}{\longrightarrow} t$,
which share the same related entity and timestamp,
but differ in their first hops. We join these events to generate
a new compound event
$e = \bfrac{\sub_1
  \xrightarrow{p_1}}{\sub_2
  \xrightarrow[p_3]{}} re_1
\xrightarrow{p_2} t$.
We treat $\sub_2$ as another related entity $re_2$ for $\sub_1$, and vice versa; in other words, 
$\reFun(e) = re_2 = \sub_{2}$,
and
$\pathtore{e} = p_1.p_3$.
For a discussion of implementation details please refer to the appendix of the full version of this paper~\cite{althoff2015fullpaperarxiv}.


\xhdr{Event descriptions}
To ensure an event can be understood by an end-user when they hover
over the corresponding box on the timeline, we have to
convert these paths 
into natural language form.
We do this by manually defining some templates 
for the 100 most frequently occurring paths
(see Figure~\ref{fig:timeline_rdjr_zoom} for an example).
For the remaining paths, we simply concatenate the English names of the
corresponding predicates and entities.

\eat{
\smallskip
In summary, given an entity {\tt s}, we consider two types of events.
A {\em simple event} $e \in E$ has the form
$s \overset{p_1}{\longrightarrow}re_1 \overset{p_2}{\longrightarrow} t$;
and an {\em HxH event} has the form
$\bfrac{s \xrightarrow{p_1}}{re_2 \xrightarrow[p_3]{}} re_1 \xrightarrow{p_2} t$.
For a simple event, we have
For an HxH event, everything is the same except that
$\reFun(e) = re_2$,
and
$\texttt{PathToRE}(x) = p_1.p_3$.
}


\vspace{-.5\baselineskip}
\subsection{Event Filtering}
\label{subsec:event_filtering}
The event generation steps we have just described may generate some irrelevant events.
For example, it can discover a path {\em ``nationality $\rightarrow$ date founded''},
so everyone with nationality {\em USA}
has a candidate event with timestamp {\em July 4, 1776},
the date on which the USA was founded.
However, arguably this event is irrelevant to most people,
since it is not specific to them, and it occurred well before many of them were born.\footnote{
%
Even for George Washington, a founding father of the USA,
it is safe to eliminate the {\em ``nationality $\rightarrow$ date founded''} path,
as there are other paths connecting him to the USA and its foundation date.
} %
We propose two simple heuristics that capture these intuitions
and filter out many irrelevant events.

The {\em Frequency Filter} uses the concept of {\em inverse document frequency}~\cite{modern-ir} 
from the IR community. The idea is that an event that is commonly associated 
with a large number of subjects is unlikely to be particularly interesting.
To formalize this, consider the set of all events $(s,re,\pi_t,t)$.
Let $N(\pi_t, re, t)$ be the number of subjects that are connected to $re$ and $t$ through path $\pi_t$, 
and let $N(\pi_t)$ be the number of distinct $(re, t)$ pairs that are connected to any subject via $\pi_t$.
Furthermore, let $C(\pi_t)=|\{(re, t) : N(\pi_t, re, t) > \theta_1 \}|$ be the number of $(re, t)$ pairs for which there are more than $\theta_1$ subjects connected through path $\pi_t$. 
Then for any given path $\pi_t$, if $C(\pi_t) / N(\pi_t) > \theta_2$, where
$\theta_2$ is some threshold, we drop that path for all subjects. 
Note that this will generalize across entities.
For example, discovering that {\em ``nationality $\rightarrow$ date founded''} is a irrelevant path 
based on people born in some countries allows us to drop
instances of this path also for people born in all other countries.


Further, entities in a KB are naturally associated with a period of \textit{existence}:
individuals are born and pass away, 
companies get founded and go out of business,
and musical groups get formed and split up.
The second filter, {\em Existence Filter}, filters out events 
that occurred before an entity began to exist.
If we find that a particular kind of path is filtered out for a large
fraction (say more than $\theta_3$) of entities,
we filter the path out for all entities. 
A canonical example is {\em ``parent $\rightarrow$ date of birth''} which obviously occurs before the subject entity is born (\ie, $\theta_3=100\%$). 
%
Based on our experiments (discussed next), we chose
$\theta_1=50$ and $\theta_2=\theta_3=0.5$, 
and we observed that slightly varying the parameters had very little impact on the results.

\vspace{-.5\baselineskip}
\subsection{Evaluation of Event Filtering}
\label{subsec:candidate_generation_evaluation_results}

\eat{
\begin{table}[tb]
\begin{center}
\setlength{\tabcolsep}{2pt}
    \begin{tabular}{lr}
    \toprule
         & \textbf{Accuracy} \\
        \midrule
        IDF Filter & 84\%\\
        Existence Filter & 100\%\\
        \midrule
        Remaining Paths & 87\%\\
        \bottomrule
    \end{tabular}
\end{center}
\vspace{-1mm}
\caption{Accuracy of two filter heuristics and of the remaining paths after filtering. Even very simple heuristics lead to good results.
\todo{Should we save the space for this table?}}
\vspace{-3mm}
\label{table:filter_accuracy}
\end{table}
}

\eat{
We will provide an extensive evaluation of our end-to-end system (including subset selection)
in Section~\ref{sec:evaluation}.
Here we just evaluate the quality of the generation and filtering steps.
}

We used Freebase~\cite{Bollacker08} to generate candidate events for four types of entities: 
music artists, actors, politicians, and athletes. We generated candidate events
for all entities of these types in Freebase and evaluated 
the quality of the results.

We evaluated the quality of our filtering using
{\em true positive / negative rate metrics} as follows.
First, for each filtering heuristic, we estimate the fraction of filtered paths that
were correctly filtered (\ie, judged irrelevant by a human) or the true negative rate.
Second, we estimate the fraction of non-filtered paths
that are correctly not filtered (\ie, judged relevant by a human)
or the true positive rate.
For each metric, we evaluated the top 100 most frequent path types 
covering over 90\% of all generated event instances 
(out of 5269 different path types generated in total).
Two domain experts manually judged each path as relevant or irrelevant.

We observe that the Frequency Filter is 84\% correct (\ie, it accidentally filters out
only 16\% of the relevant paths), and the Existence Filter is 100\% correct
(\ie, everything it filters out is irrelevant).
The main failure case for the Frequency Filter consisted of relevant events involving many entities, 
such as large award ceremonies or military conflicts.  
We also observe that among the paths that pass both filters,
87\% are correct.
The main failure case are birthdates of related people (\eg, members of the same band), 
which are arguably irrelevant to the subject.

In addition to high correctness, we need the event generation phase to have high {\em coverage}.
Figure~\ref{fig:path_coverage} 
plots the number of events on the X-axis versus the number of
entities for which we extracted at least this many events on the
Y-axis (after filtering).
Suppose we require at least 100 candidate
events for an entity before we consider it to be
``history rich'' enough for us to generate its timeline.
The figure shows that we can generate timelines for 12k entities if we
use simple events, and for 64k entities if we use compound events 
(see Section~\ref{subsec:event_generation}).
This shows that Freebase has a sufficiently rich set of events
to make our approach possible, even though it is incomplete
in many other ways \cite{Dong14}.
With the advent of systems for automated knowledge-base population
such as Knowledge Vault \cite{Dong14}, we can expect the coverage to improve further in the future.

\begin{figure}[t]
  \centering
  \includegraphics[width=0.9\linewidth]{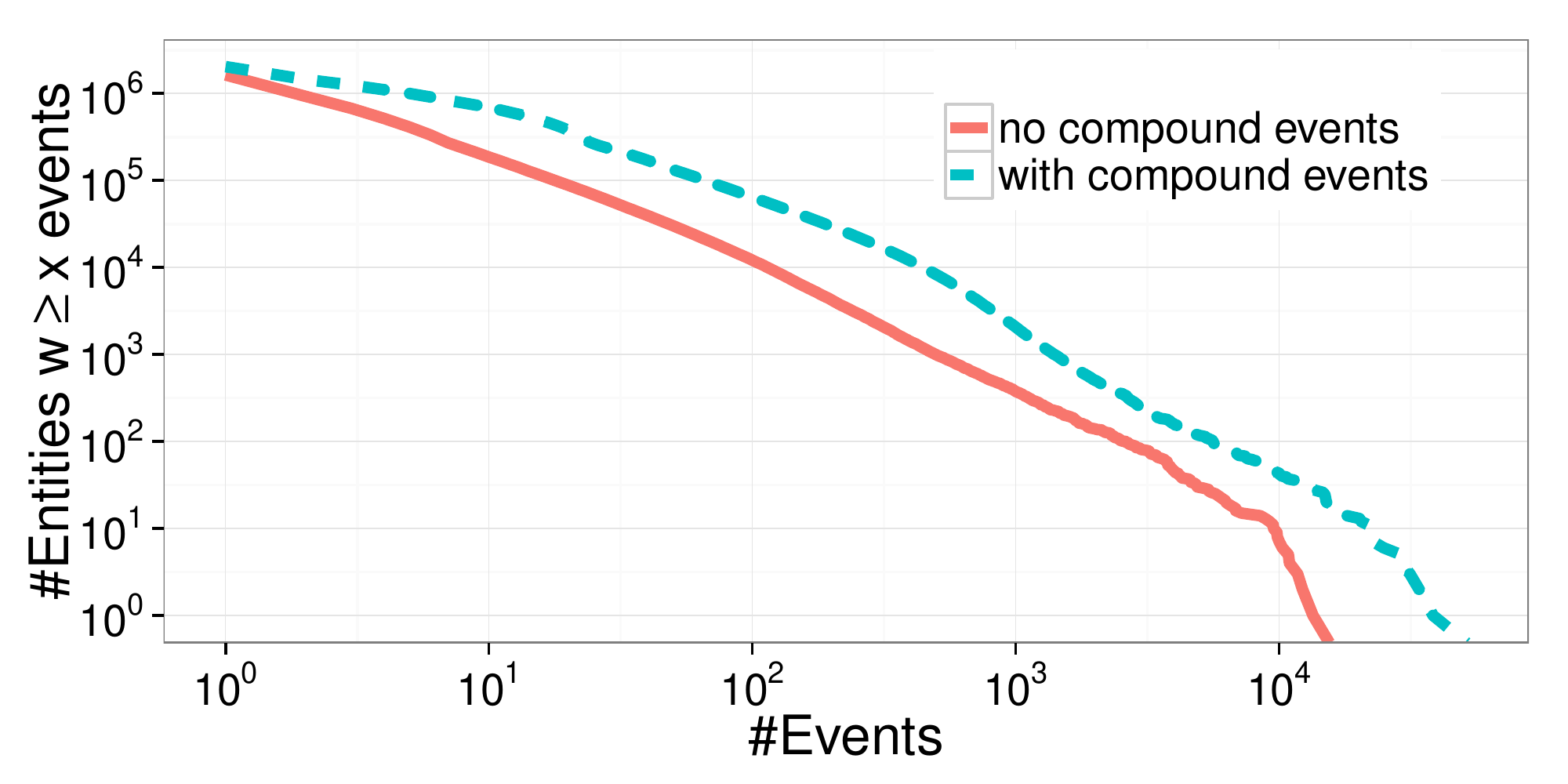} 
 \vspace{-5mm}
  \caption{Log-log coverage plot showing the number of entities with X or more candidate events 
  and illustrating the impact of adding compound events (dashed line).
   \vspace{-1mm}
  }
  \label{fig:path_coverage}
\end{figure}

\vspace{-0.75\baselineskip}
\section{Event Selection}
\label{sec:selection}

We showed in the previous section that a given entity may have hundreds of
candidate events associated with it.
In this section, we discuss our main contribution,
a way of selecting a small subset of
events to be shown on the timeline, given the
time span of interest and a specified amount of space on the screen.

Our approach will be based on submodular optimization,
which we explain in general terms in
Section~\ref{subsec:submodular_function_maximization}
(following \cite{calinescu2011maximizing,krause2012submodular}).
We define our specific optimization problem in
Section~\ref{subsec:timeline_optimization},
and give details in Section~\ref{subsec:relevance} and Section~\ref{subsec:temporal_constraint}.
In Section~\ref{subsec:optimization_algorithm},
we describe our efficient approximation algorithm,
which we prove in 
Section~\ref{subsec:approximation_guarantee} 
to yield close-to-optimal solutions.
Finally in Section~\ref{subsec:zoom},
we discuss how our algorithm enables user interactions,
such as zooming in (or out) on the timeline.

\enlargethispage{1\baselineskip} 
\subsection{Submodular Function Maximization}
\label{subsec:submodular_function_maximization}

Suppose we have a set (\eg, events) denoted by $X$,
and an evaluation function for sets $f : 2^X \rightarrow \mathds{R}$. 
Let $f_S(e) = f(S \cup \{e\}) - f(S)$ be the {\em marginal gain} of adding
element $e \in X$ to set $S$. 

A function $f : 2^X \rightarrow \mathds{R}$ is {\em submodular} if
for every pair of subsets $A \subseteq B \subseteq X$ and element $e \in X\setminus B$ we have
$f_A(e) \geq f_B(e)$.
Intuitively this means that the benefit of adding element $e$
to the smaller set $A$ is bigger than adding it to the bigger set $B$,
so $f$ exhibits the property of {\em diminishing returns}.
We restrict our attention to {\em monotone} functions;
that is, $f(A) \leq f(B)$ for all $A \subseteq B$. We also assume 
$f(\emptyset) = 0$; that is, $f$ is non-negative.

\xhdr{Constraints}
We want to compute 
$\max_{S\subseteq X} f(S)$ subject to some constraints on $S$.
A common constraint is on the size or cardinality of $S$
 \cite{krause2012submodular}.
However, in our case, we have more complex constraints,
related to temporal diversity and overlap.
To formalize these constraints, we need the notion of an
independence family, defined as follows.
An {\em independence family}
 $\mathcal{I} \subseteq 2^X$
is a family of subsets that is downward closed; that is, $A \in \mathcal{I}$ and $B \subseteq A$ implies that $B \in \mathcal{I}$. 
A set $A$ is called {\em independent} if $A \in \mathcal{I}$.
Popular independence families include {\em matroids} and
intersection of matroids \cite{calinescu2011maximizing}.
As an example, given $X = \{a,b,c\}$, an independence family is as follows: 
$\mathcal{I} = \{ \emptyset, \{a\},\allowbreak \{b\}, \{c\}, \{a,c\} \}$.


\xhdr{p-system}
For a set $Y \subseteq X$, a set $J$ is called a \textit{base} of $Y$ if $J$ is a maximal independent subset of $Y$; 
in other words $J \in \mathcal{I}$ and for each $e \in Y \setminus J$, we have $J \cup \{e\} \notin \mathcal{I}$. 
Note that $Y$ may have multiple bases, and further, a base of $Y$ may not be a base of a superset of $Y$. 
In our example of $X$ and $\cal I$, in the case of $Y=X$, the bases are $\{b\}$ and $\{a,c\}$
(since $\mathcal{I}$ does not include $\{a,b\}$, $\{b,c\}$, or $\{a,b,c\}$).

We will now use this concept to define a more general 
notion of independence families parametrized by an integer $p$,
as follows.
$(X, \mathcal{I})$ is said to be a \textit{$p$-system} if for each $Y \subseteq X$, the cardinality of the largest
base of $Y$ is at most $p$ times the cardinality of the smallest base
of $Y$:
\begin{align}
\label{eq:p_systems_property}
\frac{\max_{J:J \text{ is a base of } Y} |J|}{\min_{J:J \text{ is a base of } Y} |J|} \leq p.
\end{align}
To continue with our example of $X, Y, \cal I$, there are two bases and
$p = |\{a,c\}|/|\{b\}| = 2$.
For all other choices of $Y \subseteq X$, we have $p = 1$.
Thus, $(X, {\cal I})$ is a 2-system.
The notion of $p$-systems will be useful in Section~\ref{subsec:approximation_guarantee} to prove certain approximation guarantees.

\xhdr{Advantages}
Before continuing, it is worth discussing why we are using
submodular optimization.
A simple alternative would be to just
score each event independently, sort the events by score, and return
the top events (as long as they fit into the timeline).
Such static rankings are insufficient for our purposes since they do not consider diversity.
For instance, even though the movie {\em The Avengers} is very relevant to
Robert Downey Jr.,
the timeline should not solely consist of {\em The Avengers} events (filming, production, release date, award ceremonies, sequels, \etc).
Furthermore, we want diversity in the temporal spacing of the events
(we show in Section~\ref{sec:evaluation} that users strongly prefer diverse timelines)
which depends on the time period or zoom factor chosen by the user.

The best way to capture these effects is to reason about the entire
\textit{set} of events that should make up the timeline. 
A submodular set function allows for exactly that, and is able to
encourage different notions of diversity,
as we show in Section \ref{subsec:relevance}.
In addition, there are computational benefits to using monotone
submodular set functions.
In particular, as we show in Section \ref{subsec:approximation_guarantee}, we can devise
a greedy algorithm that enjoys certain optimality guarantees.

\subsection{Timeline Optimization Problem}
\label{subsec:timeline_optimization}

We now formalize our problem.
Let $E \subset \mathcal{E}$
be the set of all candidate events for a particular subject
entity $\sub$ constrained to events 
within the user-specified or default time span.
We will display each event as a small box of width $w$ and height $h$,
as shown in 
Figure~\ref{fig:timeline_rdjr_full}.
Assume we have available screen space of width $\mathcal{W}$ and height $\mathcal{H}$.
Let $n = \floor{\mathcal{H} / h}$ be the number of boxes that can be
stacked vertically within $\mathcal{H}$.
Our goal is to find the optimal timeline $T^*$, which we define as follows:
\begin{align}
T^* =& \argmax_{T\subseteq E} \relevance (\sub, T) \label{eq:timeline_optimization}\\
& \text{ s.t. } \temporalconstraints(T, E, \mathcal{W}, w, n) \nonumber
\end{align}
The relevance function \relevance~evaluates how relevant each event is to the
subject and how diverse they are;
this is described in more detail in Section~\ref{subsec:relevance}.
The temporal constraint \temporalconstraints~requires that all events can fit into the
provided space without overlapping or occluding each other;
this is described in more detail in Section~\ref{subsec:temporal_constraint}.

Note also that our algorithm is able to adapt to different form
factors, for example for mobile or desktop, since height and width are just
parameters to the optimization algorithm. 

\subsection{Relevance Function}
\label{subsec:relevance}
The function $\relevance(\sub,T)$
captures the quality of the selected subset of events $T$
with respect to the timeline subject $\sub$.
This is defined as a linear combination of two different kinds of
relevance functions:
\begin{align*}
\relevance (s, T) = \lambda \, \entityrelevance (s, T) 
+ (1 - \lambda) \, \daterelevance (s,T)
\end{align*}
where $0 \leq \lambda \leq 1$ trades off the importance of related
entities (\entityrelevance) versus the importance of related dates (\daterelevance).\footnote{
  We set $\lambda=0.75$ throughout,
  based on preliminary experiments on a holdout set showing that users slightly prefer
  entity relevance to date relevance.
}
We define these terms next.

\subsubsection{Entity Relevance}
\label{subsubsec:entity_relevance}

We define the relevance of a set of events $T$ to an entity $s$ as
follows:
\[
\entityrelevance (\sub,T) =
w^e_1 \, \EtoE(\sub, T)
+ w^e_2 \, \EtoEPath(T) 
+ w^e_3 \, \GtoE(T)
\]
where \EtoE~measures how relevant the specific {\em events} are,
\EtoEPath~ \linebreak[4]measures how relevant the {\em paths} are,
and \GtoE~measures how relevant the events are {\em globally} (\ie, independent of $s$).
As we show shortly, we combine \EtoE\ with \EtoEPath\ and \GtoE\ to handle data sparsity,
\cf, backoff-smoothing \cite{Katz87estimationof}.
This enables inductive reasoning that certain relationships hold
generally
when we only see a few examples of them.
For example, on average, movie roles are more relevant to actors than TV episode roles (\EtoEPath).
We discuss how we set the weight parameters in Section~\ref{sec:evaluation}.


In more detail, we measure the entity-to-entity score
 as follows:
\[
\EtoE(\sub, T) = \sum_{re \in \{\reFun(e)
  \;|\; e \in T\}}  \EtoECooc(\sub, re) 
\]
where $\EtoECooc(\sub, re)$ measures
co-occurrence of the entities, and is defined in
Section~\ref{sec:cooc}.

Since a path from a subject to a specific entity may occur too rarely
to be reliably estimated, we also consider measuring
how good the path is, by averaging the co-occurrence
score over all entities that can be
reached via all the paths in the timeline:
\begin{eqnarray*}
\lefteqn{\EtoEPath(T) } \\
 &=&
\sum_{p \in \{\pathtore{e} \;|\; e \in T\} }
\average_{e \in \mathcal{E},\, \pathtore{e} = p} \EtoECooc(\subFun(e), \reFun(e))
\end{eqnarray*}


Finally, since even the $\EtoEPath$ signal may be too sparse to
reliably estimate, we consider
\GtoE, which estimates the global importance of each entity in
the timeline:
\begin{align*}
\GtoE(T) = \sum_{re \in \{\reFun(e) \;|\; e \in T\}}  \globalimportance(re)
\end{align*}

We estimate $\globalimportance(re)$ as
the fraction of search queries that mention the entity $re$,
measured from a 3-month query log
(though other measures of global importance could be used instead).
Inferring the entity mentioned in a query is done using a proprietary
system that applies standard entity linkage algorithms
(such as \cite{Shen2015}) to the landing page of the query.

All three functions, $\EtoE$, $\EtoEPath$, and $\GtoE$, are weighted coverage functions 
defined over a {\em set} rather than a {\em multiset} of related entities (or paths to related entities).
As such, they naturally favor content diversity as duplicate entities or paths are only counted once.

\subsubsection{Date Relevance}
We define the relevance of a set of dates as follows:
\[
\daterelevance (\sub,T) = 
 w^d_1 \, \EtoD(\sub, T) 
+ w^d_2  \, \EtoDPath(T)
\]
The functions \EtoD~and \EtoDPath~ are defined in a very
similar way to their \EtoE~counterparts.
For specific dates we have
\[
\EtoD(\sub, T) = \sum_{t \in \{\tFun(e) \;|\; e \in T\}}  \EtoDCooc (\sub, t) 
\]
Recall that $\tFun(e)$ is the timestamp of event $e$, so
$\EtoDCooc$ measures how often an entity and date co-occur,
as explained below.
Then for the path level we have
\begin{eqnarray*}
\lefteqn{\EtoDPath(T) } \\
&=& \sum_{p\in \{\pathtots{e} \;|\; e \in T\}}  
\average_{e \in \mathcal{E},\,  \pathtots{e} = p}
\EtoDCooc(\subFun(e), \tFun(e))
\end{eqnarray*}

Similarly, we again use a set instead of a multiset for the timestamps and time paths
to favor temporal diversity.

\subsubsection{Web-based Co-occurrence Scores}
\label{sec:cooc}

We use co-occurrence signals between entities on the web
 to capture how related two entities are (\EtoECooc).
Similarly, we use co-occurrence between
entities and dates
(\EtoDCooc) to capture how related 
a particular date is to an entity.
We compute these quantities as follows:
\vspace{-1mm}
\begin{enumerate}
\item We run a suite of standard NLP tools (named entity recognition,
  coreference resolution, \etc) over a large corpus of 10B web
  documents using a set of in-house tools, 
  similar to the Stanford CoreNLP package.\footnote{
  \small
  \url{http://nlp.stanford.edu/software/corenlp.shtml}
  }
  We extract entity mentions (which are resolved to Freebase IDs)
and date mentions (both year and full date if available), 
using techniques similar to those described
in \cite{Shen2015}.

\item For each entity mention, we
 collect all entity-entity and entity-date co-occurrences within a
 small window around the mention (window of 100 characters or 10-12 words on average).

\item We count these co-occurrences,
and convert to probabilities (by normalizing the counts).
We define the co-occurrence scores using {\em normalized pointwise mutual
information}
\linebreak[4](\textsc{NPMI})
as follows:
\end{enumerate}
\vspace{-5mm}
\begin{align*}
\EtoECooc(\sub, re) &= \textsc{NPMI}(\sub; re) = \frac{\textsc{PMI}(\sub; re) }{- \log p(\sub, re)}\\
\textsc{PMI}(\sub; re) &= \log \frac{p(\sub, re)}{p(\sub) \, p(re)}
\end{align*}
\EtoDCooc~is defined exactly like \EtoECooc~with the
only difference that a timestamp $t$ is substituted for entity
$re$. 

PMI measures the difference between the co-occurrence probability and
the probability expected by chance if the events were independent.
It is critical to account for particularly popular entities
(\eg, Barack Obama) or dates (\eg, 2014), and dividing the co-occurrence
probability by the popularity of the co-occurring entities/dates 
is a principled way of achieving this. 

NPMI normalizes PMI to the range $[-1, 1]$.
Since we are only interested in the most related pairs of entities (or
entity/date pairs), we only retain positive \textsc{NPMI} scores.
Furthermore,
we require that this co-occurrence was extracted from at least 
five different web domains for robustness.
Computing co-occurrence statistics from a large web corpus (10B documents) and generating candidate events (for over 1M entities) takes about six hours using map-reduce.

\subsubsection{Submodularity of Objective Function}

We now show that our objective function is a monotone submodular set
function.

\begin{theorem}{}
\label{theorem:submodular_objective}
Let $f(T) = \relevance(\sub, T)$ for any given subject $\sub$.
Then $f: 2^T \rightarrow \mathds{R}^+$  is a monotone submodular set function.
\end{theorem}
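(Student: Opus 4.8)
The plan is to exhibit $f$ as a nonnegative linear combination of elementary \emph{weighted coverage functions}, each monotone submodular, and then invoke closure of the monotone-submodular class under such combinations. Concretely, I would first record two standard facts. \textbf{(i)} Fix a finite universe $U$, a map $\phi : E \to U$ assigning to each event $e$ the item $\phi(e)$ it ``covers,'' and a nonnegative weight $c : U \to \mathds{R}^+$. Then $g(T) = \sum_{u \in \{\phi(e)\,:\,e \in T\}} c(u)$ is monotone submodular with $g(\emptyset)=0$: monotonicity holds because enlarging $T$ can only enlarge the covered set $\{\phi(e):e\in T\}$ while all weights are nonnegative, and submodularity holds because for $A\subseteq B$ and $e\notin B$ the marginal gain $g_A(e)$ equals $c(\phi(e))$ when $\phi(e)$ is not yet covered by $A$ and $0$ otherwise, and $\phi(e)$ being uncovered by $B$ implies it is uncovered by $A$, so $g_A(e)\ge g_B(e)$. \textbf{(ii)} If $g_1,\dots,g_m$ are monotone submodular with $g_i(\emptyset)=0$ and $\alpha_1,\dots,\alpha_m\ge 0$, then $\sum_i \alpha_i g_i$ is again monotone submodular and vanishes on $\emptyset$; this is immediate from linearity of the marginal-gain operator.

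Next I would cast each of the five building blocks of $\relevance$ into the coverage form of fact (i), in each case identifying the universe, the covering map, and the weight, and checking the weight is nonnegative. For $\EtoE(\sub,\cdot)$ the universe is the set of related entities, $\phi(e)=\reFun(e)$, and $c(re)=\EtoECooc(\sub,re)$; for $\GtoE$ the same universe and map with $c(re)=\globalimportance(re)$; for $\EtoEPath$ the universe is the set of entity paths, $\phi(e)=\pathtore{e}$, and the weight is the precomputed quantity $\average_{e'\in\mathcal{E},\,\pathtore{e'}=p}\EtoECooc(\subFun(e'),\reFun(e'))$; and symmetrically $\EtoD$ and $\EtoDPath$ use timestamps $\tFun(e)$ and time paths $\pathtots{e}$ with weights $\EtoDCooc(\sub,t)$ and the corresponding average. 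The decisive point to verify here is that every weight is genuinely nonnegative and, for the two path-level terms, genuinely a \emph{constant} on the universe (independent of $T$): nonnegativity follows because we retain only positive \textsc{NPMI} scores and because $\globalimportance$ is a query fraction, and an average of nonnegative numbers is nonnegative; independence of $T$ holds because the inner $\average$ ranges over all of $\mathcal{E}$, not over $T$. This is also exactly where the ``set, not multiset'' convention of Section~\ref{subsubsec:entity_relevance} is used: because duplicate entities, paths, dates, and time paths are collapsed before summing, each function really is a coverage function in the sense of fact (i).

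Finally I would assemble the pieces. Writing $f = \lambda\,(w^e_1\EtoE + w^e_2\EtoEPath + w^e_3\GtoE) + (1-\lambda)\,(w^d_1\EtoD + w^d_2\EtoDPath)$ and using that $\lambda\in[0,1]$ together with nonnegative mixing weights $w^e_i,w^d_j\ge 0$, all five coefficients are nonnegative, so fact (ii) yields that $f$ is monotone submodular with $f(\emptyset)=0$. Nonnegativity of each summand then gives $f:2^T\to\mathds{R}^+$, as claimed.

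I expect the only real obstacle to be bookkeeping rather than mathematics: making the universe/map/weight triple precise for the two path-level terms and confirming that their averaged weights are $T$-independent constants, so that fact (i) applies verbatim. Everything else is the routine verification that the building blocks are coverage functions and that nonnegative combinations preserve monotonicity, submodularity, and vanishing at $\emptyset$.
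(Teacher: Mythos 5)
Your proposal is correct and takes essentially the same route as the paper's own proof: both express $\relevance(\sub,\cdot)$ as a nonnegative linear combination of weighted coverage functions (one per signal, with nonnegative weights from the positive \textsc{NPMI} scores and \globalimportance) and then invoke closure of monotone submodularity, nonnegativity, and vanishing at $\emptyset$ under such combinations. The only difference is one of detail, not of substance --- you verify the coverage-function facts and the $T$-independence of the path-level averaged weights explicitly, where the paper cites standard references and leaves those checks implicit.
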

\begin{proof}
First we note that $f$ is a non-negative linear combination of
weighted coverage functions 
(since $w_1^e, w_2^e, w_3^e, w_1^d, w_2^d \geq 0$).
Each of these are
known to be submodular \cite{feige1998threshold,
  krause2012submodular},
which is easy to see,
due to the diminishing returns property of weighted coverage functions.
Furthermore, a non-negative linear combination of submodular functions is submodular as well.

Second, we note that $f$ is also monotone.
This holds since all individual weighted coverage functions are
non-negative (because 
 \EtoECooc, \EtoDCooc, and \textsc{Global\-Importance}
 are all non-negative),
so adding up more terms makes the sum bigger.

Third, we have $f(\emptyset)=0$, again because all weighted coverage functions lead to empty sums.
\end{proof}



\subsection{Temporal Diversity Constraint}
\label{subsec:temporal_constraint}

The layout constraint requires that all events can fit into a timeline
of width $\mathcal{W}$ and height $\mathcal{H}$ without overlap.
We consider a simple layout strategy: 
if the boxes (of width $w$) depicting
two events have a temporal overlap,
we can stack one on top of the other,
as shown in
Figure~\ref{fig:timeline_rdjr_full},
but we require that the height of this stack be at most
$n = \floor{\mathcal{H} / h}$.

We can define this constraint more formally as follows.
Recall that each event $e \in E$ has a timestamp $\tFun(e) \in \mathds{R}$.
Let $R$ be an interval $R = [a,b] \subset \mathds{R}$.
We denote the set of events in $T \subseteq E$  with timestamps within $R$ as $T \sqcap R= \{ e\in T \;|\; \tFun(e) \in R \}$. 
We define $t_w$ as the length of a time period that corresponds to the width of $w$ on the timeline;
$t_w$ can be easily computed according to $\mathcal{W}, w$, and the beginning and ending timestamps for $E$.
Finally,
we say a set $T \subseteq E$ of events satisfies the layout constraint $\temporalconstraints(T, E, \mathcal{W}, w, n)$  if
\begin{align}
\forall t\in \mathds{R}: |T \sqcap [t, t+t_w) | \leq n \label{eq:LayoutConstraint}
\end{align}
This constraint can be interpreted as follows:
for any point in time $t\in \mathds{R}$, draw a line of width up to $t_w$ starting at $t$.
Consider the intersection of all timestamps in $T$ with this line.
If the size of the intersection is less or equal to $n$, then we know that we can vertically stack the events in the intersection without violating the height constraint.



It turns out that this constraint forms a $p$-system that enables us to prove approximation guarantees (see Section~\ref{subsec:approximation_guarantee}).
\begin{theorem}{}
\label{theorem:layout_constraints_psystem}
Let $(E,\mathcal{I})$ be an independence family based on our layout
constraint where $T \in \mathcal{I}$ if $T \subseteq E$ satisfies
Equation~\eqref{eq:LayoutConstraint}.
Then $(E,\mathcal{I})$ forms a $p$-system for $p=2$.
\end{theorem}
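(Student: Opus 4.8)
The plan is to verify the two defining properties of a $2$-system directly. First, $(E,\mathcal{I})$ is genuinely an independence family: if $T$ satisfies Equation~\eqref{eq:LayoutConstraint} then so does every subset of $T$, since deleting events can only decrease each count $|T\sqcap[t,t+t_w)|$; hence $\mathcal{I}$ is downward closed. It then remains to bound, for every $Y\subseteq E$, the ratio between the largest and the smallest base of $Y$ by $2$. Since every base is by definition a maximal independent subset, it suffices to prove the following statement: for any independent $I\subseteq Y$ and any maximal independent $J\subseteq Y$ we have $|I|\le 2|J|$. Applying this with $J$ the smallest base and $I$ the largest base then gives Equation~\eqref{eq:p_systems_property} with $p=2$.

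My approach rests on a geometric reformulation. Associate with each event $e$ the half-open interval $\rho(e)=(\tau(e)-t_w,\,\tau(e)]$ of window start-times $t$ for which $e$ lies in $[t,t+t_w)$. Then $|T\sqcap[t,t+t_w)|$ is exactly the number of intervals $\rho(e)$, $e\in T$, containing $t$, so $T\in\mathcal{I}$ precisely when the equal-length intervals $\{\rho(e):e\in T\}$ have overlap depth at most $n$. The case $n=1$ is cleanest and drives the intuition: there independence means pairwise disjointness, and maximality of $J$ means every $\rho(a)$, $a\in I$, meets some $\rho(b)$, $b\in J$ (otherwise $a$ could be added to $J$). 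Because all intervals share the length $t_w$, the disjoint family $\{\rho(a):a\in I\}$ can contribute at most two members meeting a fixed $\rho(b)$: their right endpoints would all lie in a length-$2t_w$ window, yet disjoint equal-length intervals have endpoints at least $t_w$ apart. Charging each $a\in I$ to such a $b$ yields $|I|\le 2|J|$.

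For general $n$ I would run the same overlap count with multiplicities. Cover the timestamps of $Y$ greedily from the left by disjoint length-$t_w$ windows $C_1,\dots,C_\gamma$, where each $C_i$ starts at the leftmost still-uncovered timestamp $c_i$, so consecutive starts are at least $t_w$ apart. Since each $C_i$ is itself a legal window and the $C_i$ are disjoint and cover $Y$, we get the upper bound $|I|=\sum_i |I\sqcap C_i|\le n\gamma$. For the matching lower bound I use maximality: for each $C_i$ the event sitting at $c_i$, if absent from $J$, cannot be added, so its doubled neighbourhood $U_i=(c_i-t_w,c_i+t_w)$ must already contain $n$ events of $J$; and because the starts $c_i$ are at least $t_w$ apart, every event of $J$ lies in at most two of the $U_i$. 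Summing these per-window counts and dividing by the multiplicity $2$ gives $|J|\ge n\gamma/2$, whence $|I|\le n\gamma\le 2|J|$.

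The hard part is making the lower bound tight rather than losing an extra additive $|J|$ (which would only prove $p=3$). The delicate case is a cover window whose leftmost event already belongs to $J$: there I must argue that maximality still forces $n$ members of $J$ into $U_i$ whenever $C_i$ also contains an unselected event, and that the remaining possibility—a fully selected, isolated cluster of at most $n$ events—lies in every base with identical multiplicity and therefore cancels in the ratio. Pinning down these boundary clusters, together with the endpoint and tie conventions of the half-open intervals $\rho(e)$, is the only place requiring real bookkeeping; the configuration with timestamps $0,0.6,1.2$ (bases $\{0,1.2\}$ and $\{0.6\}$, each point replicated $n$ times, giving bases of size $2n$ and $n$) shows that $p=2$ is attained and fixes the constants.
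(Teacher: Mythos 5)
Your setup is sound: downward closure is immediate, the reduction to showing $|I| \le 2|J|$ for every independent $I \subseteq Y$ and every base $J$ of $Y$ is valid, and your $n=1$ argument (each $\rho(a)$, $a \in I$, meets some $\rho(b)$, $b \in J$; at most two disjoint equal-length intervals can meet a fixed one) is complete and correct. The gap is in the general-$n$ step, and it is genuine rather than bookkeeping. First, the unpatched chain $|I| \le n\gamma \le 2|J|$ is broken because the lower bound $|J| \ge n\gamma/2$ is false: take $n = 3$, $t_w = 1$ and timestamps $0, 1, 2, \dots, \gamma-1$; every length-$t_w$ window contains at most one event, so $Y$ is independent and is its own unique base, giving $|J| = \gamma < 3\gamma/2$. (The theorem survives there only because the upper bound $|I| \le n\gamma$ is equally loose, but your chain of inequalities fails.) Second, the patch you sketch rests on the sub-claim that an unselected event in $C_i$ forces $n$ members of $J$ into $U_i = (c_i - t_w,\, c_i + t_w)$; this is also false. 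Blockedness of $b \in C_i \setminus J$ only places $n$ elements of $J$ inside $(\tau(b) - t_w,\, \tau(b) + t_w)$, which can extend a full $t_w$ beyond $U_i$. Concretely, with $n = 2$, $t_w = 1$, timestamps $\{0,\, 0.9,\, 1.5,\, 1.6\}$, the set $J = \{0,\, 1.5,\, 1.6\}$ is a base, $C_1 = [0,1)$ contains the unselected event $0.9$, yet $J \sqcap U_1 = \{0\}$ has one element, not two. Moreover your dichotomy is not exhaustive: a window can be fully selected without being an isolated cluster (e.g., $n=2$, timestamps $0, 0.9, 1.0, 1.1$ with base $\{0, 0.9, 1.1\}$). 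If you enlarge the neighbourhoods to $(c_i - t_w,\, c_i + 2t_w)$ so that the sub-claim becomes true, an element of $J$ can lie in three of them, and the count only yields $p = 3$ --- exactly the loss you were trying to avoid.

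The paper closes this gap with a mechanism a static window count cannot replicate: an ordered deletion process. It repeatedly deletes the leftmost remaining element $b$ of $J_{\min}$ together with the (at most) two leftmost remaining elements of $J_{\max}$ lying in $(\tau(b) - t_w,\, \tau(b) + t_w)$, and maintains the invariant that every surviving element of $J_{\max}$ still has a surviving element of $J_{\min}$ within distance strictly less than $t_w$. The crucial step (Case 2 of the paper's invariance lemma) exploits the left-to-right order: if the invariant ever failed at a surviving $a^* \in J_{\max}$, then the $n$ or more elements of $J_{\min}$ that blocked $a^*$ must all have been deleted; each such deletion removed two elements of $J_{\max}$ that sort before $a^*$, so at least $2n$ deleted elements of $J_{\max}$, together with $a^*$ itself, are packed into $(\tau(a^*) - 2t_w,\, \tau(a^*)]$ --- that is $2n+1$ elements in an interval of length under $2t_w$, contradicting independence of $J_{\max}$. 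It is this dynamic charging, where the quota travels with the deletions instead of being fixed per window, that absorbs the cross-window interactions on which both your lower bound and your patch founder; to salvage your approach you would have to process windows left to right and charge $J$ against a moving frontier, at which point you have essentially reconstructed the paper's deletion argument.
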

\vspace{-0.5\baselineskip}
\begin{proof-idea}
For the full proof please refer to the appendix of the full version of this paper~\cite{althoff2015fullpaperarxiv}.
The idea is as follows:
We need to show that $|J_{\max}| / |J_{\min}| \leq 2$ where $J_{\min}$ and $J_{\max}$ are minimal and maximal bases of an arbitrary subset $T \subseteq E$.
We can show that $J_{\max}$ can be at most twice as large as $J_{\min}$ 
by ``deleting'' all elements from $J_{\max}$ eventually,  
where in each step we delete an element $b \in J_{\min}$, 
and up to two elements in $J_{\max}$ (if they exist) in close proximity to $b$.
Intuitively this process works because we can never have an element in $J_{\max}$ that we cannot delete in this way
since then either 
$J_{\min}$ has too few points to be a maximal independent set (base),
or $J_{\max}$ has too many points to be an independent set (it would violate the layout constraint).
Both would contradict our assumption that both $J_{\min}$ and $J_{\max}$ are bases of $Y$.
\end{proof-idea}
\vspace{-2\baselineskip}

\subsection{Optimization Algorithm}
\label{subsec:optimization_algorithm}

Our problem is reduced to the problem of finding a solution $T^*$ that obtains $\max_{T \in \mathcal{I}} \relevance(s,T)$, 
where $\mathcal{I}$ is the independence family that we defined in Section~\ref{subsec:temporal_constraint} (see Theorem~\ref{theorem:layout_constraints_psystem}).
Unfortunately, such problems are NP-hard for many classes of
submodular functions,
including weighted coverage \cite{feige1998threshold} (our case).
Therefore we focus our attention
on efficient algorithms with theoretical approximation guarantees.

As we show in Section~\ref{subsec:approximation_guarantee},
a greedy algorithm (see Algorithm \ref{alg:greedy}) 
 has certain approximation guarantees.
The algorithm incrementally builds an approximate solution $\hat{T}$ (without backtracking), starting with the empty set. 
In each iteration it adds an element $e$ from the set of valid candidates $C$ that most improves the current
solution (according to the marginal gain $\relevance_{\hat{T}}(e)$),
while maintaining independence of the solution (see line 3).

\begin{algorithm}[t]
  \caption{{\sc GreedyTimeline}\label{alg:greedy}}
  \begin{algorithmic}[1]
  \State $\hat{T} \leftarrow \emptyset, C \leftarrow \emptyset$ 
    \Repeat
        \State $C \leftarrow \{e \in E\setminus \hat{T} \;|\; \hat{T} \cup \{e\} \in \mathcal{I}\}$
        \If{$C \neq \emptyset$} 
           \State $e \leftarrow \argmax_{e' \in C} \relevance_{\hat{T}}(e')$
           \State $\hat{T} \leftarrow \hat{T} \cup \{e\}$
        \EndIf
    \Until{$C = \emptyset$}
  \State \Return $\hat{T}$
  \end{algorithmic}
\end{algorithm}




This greedy algorithm has complexity $\mathcal{O}(|E|^2)$ 
(assuming computing $\relevance_{\hat{T}}(\cdot)$ takes constant time).
In practice, it can be sped up significantly in practice by using
lazy evaluations, as 
first proposed in \cite{minoux1978accelerated}
(see also \cite{leskovec2007cost} and Section 2 of \cite{krause2012submodular}).
This ``lazy greedy'' algorithm exploits the fact that the marginal gain for each element only decreases with each iteration; 
that is, $\relevance_{\hat{T}}(e) \geq \relevance_{\hat{T}'}(e)$ for $\hat{T} \subseteq \hat{T}'$, and therefore we can use previously computed values as upper bounds to save many evaluations of $\relevance_{\hat{T}}$.
We use this more efficient implementation of the greedy algorithm.

\subsection{Approximation Guarantee}
\label{subsec:approximation_guarantee}







Before we can prove our main theoretical result,
we introduce the following lemma.
It is proved in~\cite{nemhauser1978analysis} for the special
case where $\mathcal{I}$ is defined by 
the intersection of $p$ matroids on $X$,
and for the more general case of $p$-systems in Appendix B of \cite{calinescu2011maximizing}.

\vspace{-1mm}
\begin{lemma}
\label{theorem:general_approximation_ratio}
~\cite{calinescu2011maximizing,nemhauser1978analysis}
The algorithm {\sc GreedyTimeline} to compute
$\max_{S \in \mathcal{I}} f(S)$, where $(X, \mathcal{I})$ is a
$p$-system and $f : 2^X \rightarrow \mathds{R}^+$
 is a monotone submodular set function, has a tight approximation ratio of $1/(p + 1)$.
\end{lemma}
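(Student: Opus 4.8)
The plan is to prove the achievability direction $f(\hat{T}) \geq \frac{1}{p+1} f(T^*)$, where $\hat{T}$ is the output of {\sc GreedyTimeline} and $T^* = \argmax_{S \in \mathcal{I}} f(S)$, and then to argue that the ratio cannot be improved. Write $\hat{T} = \{g_1, \dots, g_k\}$ in the order greedy selected the elements, let $G_i = \{g_1, \dots, g_i\}$ with $G_0 = \emptyset$, and let $\rho_i = f(G_i) - f(G_{i-1}) = f_{G_{i-1}}(g_i)$ be the marginal gain at step $i$, so that $f(\hat{T}) = \sum_{i=1}^k \rho_i$. First I would use monotonicity and submodularity to upper bound the optimum: monotonicity gives $f(T^*) \leq f(T^* \cup \hat{T})$, and telescoping over the elements of $T^* \setminus \hat{T}$ one at a time, together with submodularity, gives $f(T^* \cup \hat{T}) \leq f(\hat{T}) + \sum_{e \in T^* \setminus \hat{T}} f_{\hat{T}}(e)$. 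It therefore suffices to bound the total residual gain $\sum_{e \in T^* \setminus \hat{T}} f_{\hat{T}}(e)$ by $p \cdot f(\hat{T})$.

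The heart of the argument is a charging scheme that partitions the optimum as $T^* = O_1 \cup \dots \cup O_k$ with two properties: (i) every $o \in O_i$ is still a \emph{feasible} candidate at step $i$, i.e.\ $G_{i-1} \cup \{o\} \in \mathcal{I}$; and (ii) $|O_i| \leq p$ for every $i$. Property (i) is exactly what lets me invoke the greedy rule: since $g_i$ maximized the marginal gain over all feasible candidates at step $i$, we get $f_{G_{i-1}}(o) \leq f_{G_{i-1}}(g_i) = \rho_i$ for each $o \in O_i$. Combining this with submodularity, which gives $f_{\hat{T}}(o) \leq f_{G_{i-1}}(o)$ because $G_{i-1} \subseteq \hat{T}$, and with property (ii), yields
\begin{align*}
\sum_{e \in T^* \setminus \hat{T}} f_{\hat{T}}(e) \leq \sum_{i=1}^k \sum_{o \in O_i} f_{G_{i-1}}(o) \leq \sum_{i=1}^k |O_i| \, \rho_i \leq p \sum_{i=1}^k \rho_i = p \, f(\hat{T}).
\end{align*}
Substituting back gives $f(T^*) \leq (p+1)\, f(\hat{T})$, which is the claim.

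The main obstacle is constructing the partition so that property (ii) holds, and this is precisely where the $p$-system hypothesis enters. The key structural fact is that whenever $A, B \in \mathcal{I}$ and $A$ is a base of $A \cup B$, one has $|B| \leq p\,|A|$: extending $B$ to a base $B'$ of $A \cup B$ and applying Equation~\eqref{eq:p_systems_property} to $Y = A \cup B$, where both $A$ and $B'$ are bases of $Y$, gives $|B| \leq |B'| \leq p\,|A|$. I would build the partition backwards over the greedy steps, at each step applying this inequality with $A$ a prefix $G_i$ and $B$ the set of optimal elements not yet assigned and already blocked by $G_i$, to guarantee that at most $p$ optimal elements must be charged to any single step; the fact that $\hat{T}$ is a \emph{maximal} independent set, which holds because the algorithm terminates only when $C = \emptyset$, ensures every optimal element eventually becomes infeasible and so gets charged. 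Finally, for tightness I would exhibit, as in the cited references, an explicit $p$-system together with a monotone submodular $f$ on which greedy is forced onto a base whose value is only a $\frac{1}{p+1}$ fraction of the optimum, establishing that the ratio cannot be improved in general.
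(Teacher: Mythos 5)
Your proposal is correct in substance, but it necessarily takes a different route from the paper, because the paper does not prove this lemma at all: it invokes it as a known result, pointing to \cite{nemhauser1978analysis} for the special case where $\mathcal{I}$ is an intersection of $p$ matroids and to Appendix~B of \cite{calinescu2011maximizing} for general $p$-systems (including tightness); the paper's own appendix proves only the complementary fact that its layout constraint forms a $2$-system (Theorem~\ref{theorem:layout_constraints_psystem}). What you have written is essentially a reconstruction of the argument in those references: the decomposition $f(T^*) \le f(T^* \cup \hat{T}) \le f(\hat{T}) + \sum_{e \in T^* \setminus \hat{T}} f_{\hat{T}}(e)$, the charging of optimal elements to greedy steps via the greedy selection rule and submodularity, and the base-extension inequality (if $A$ is a base of $A \cup B$ and $B$ is independent then $|B| \le p\,|A|$) are exactly the standard ingredients, and each of these steps checks out. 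One spot in your sketch is thinner than it should be: applying your inequality with $A = G_i$ and $B$ the set of blocked optimal elements yields only the \emph{cumulative} bound $|U_i| \le p\,i$, where $U_i$ is the set of optimal elements infeasible after step $i$; to obtain a partition with $|O_i| \le p$ per step you still need a Hall-type (earliest-deadline-first) assignment argument, exploiting that each optimal element's set of feasible steps is a prefix $\{1,\dots,j_o\}$ since blocking is monotone under set inclusion. Alternatively, you can dispense with the explicit partition and finish by Abel summation: bound the residual gains by $\sum_i (|U_i|-|U_{i-1}|)\,\rho_i$, use $|U_i| \le p\,i$ and the fact that the greedy marginal gains satisfy $\rho_1 \ge \rho_2 \ge \cdots$ (which follows from submodularity and downward closure), and conclude $\sum_{e \in T^* \setminus \hat{T}} f_{\hat{T}}(e) \le p\,f(\hat{T})$. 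Finally, like the paper, you defer the tightness claim to the cited references, so on that point the two treatments coincide; a self-contained proof would require exhibiting the explicit hard instances.
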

\eat{\vspace{-1.0\baselineskip}
\begin{proof}
See \cite{nemhauser1978analysis} for the special
case where $\mathcal{I}$ is defined by 
the intersection of $p$ matroids on $X$.
A proof for the more general case of $p$-systems is given in Appendix B of \cite{calinescu2011maximizing}.
These ratios are tight for all $p$ \cite{calinescu2011maximizing}.
\end{proof}
}
\vspace{-2mm}

We have shown in Theorem~\ref{theorem:submodular_objective}
that \relevance~is a monotone submodular set function  
and in Theorem~\ref{theorem:layout_constraints_psystem} that the temporal constraints 
\temporalconstraints~form a $p$-system for $p=2$.
Following Lemma~\ref{theorem:general_approximation_ratio}, our greedy algorithm
has the following approximation bound.
\begin{theorem}
\label{theorem:our_approximation_ratio}
Algorithm {\sc GreedyTimeline} has an approximation ratio of $1/3$;
that is, 
$$\relevance(s,\hat{T}) / \relevance(s,T^*) \geq 1/3,$$
for any subject $s$, where $\hat{T}$ is the output of our algorithm {\sc GreedyTimeline},
and $T^*$ is the optimal solution.
\label{theorem:our_approximation_guarantee}
\end{theorem}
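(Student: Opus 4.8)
The plan is to obtain the bound as an immediate consequence of the three results already established, since the heavy lifting has been done in Theorem~\ref{theorem:submodular_objective}, Theorem~\ref{theorem:layout_constraints_psystem}, and Lemma~\ref{theorem:general_approximation_ratio}. First I would observe that the optimization problem solved by {\sc GreedyTimeline}, namely $\max_{T \in \mathcal{I}} \relevance(s,T)$, is exactly the constrained problem defining $T^*$ in Equation~\eqref{eq:timeline_optimization}, because membership $T \in \mathcal{I}$ is by construction equivalent to $T$ satisfying the layout constraint $\temporalconstraints$. This alignment is what lets the generic guarantee transfer to our specific timeline objective.

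Next I would verify that the hypotheses of Lemma~\ref{theorem:general_approximation_ratio} are met. Theorem~\ref{theorem:submodular_objective} supplies the requirement that $f(T) = \relevance(s,T)$ is a monotone submodular set function mapping into $\mathds{R}^+$ (with $f(\emptyset)=0$), and Theorem~\ref{theorem:layout_constraints_psystem} supplies the requirement that $(E,\mathcal{I})$ is a $p$-system, with the explicit value $p=2$. With both hypotheses in place, Lemma~\ref{theorem:general_approximation_ratio} applies verbatim and yields an approximation ratio of $1/(p+1)$. Substituting $p=2$ gives $1/(p+1) = 1/3$, i.e. $\relevance(s,\hat{T})/\relevance(s,T^*) \geq 1/3$ for the greedy output $\hat{T}$, which is precisely the claim.

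Since every ingredient is already proved, there is no genuine mathematical obstacle here; the theorem is essentially a corollary. The only point that needs care is bookkeeping rather than analysis: confirming that the constraint family of Theorem~\ref{theorem:layout_constraints_psystem} is the same $\mathcal{I}$ over which the greedy algorithm maintains independence (line~3 of Algorithm~\ref{alg:greedy}), and that the non-negativity and monotonicity conditions demanded by the lemma are exactly the properties certified in Theorem~\ref{theorem:submodular_objective}. Once these correspondences are checked, the $p=2$ instantiation of the tight $1/(p+1)$ bound is immediate and the proof is complete.
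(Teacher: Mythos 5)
Your proposal is correct and follows exactly the paper's own argument: the paper likewise proves this theorem by invoking Theorem~\ref{theorem:submodular_objective} (monotone submodularity of \relevance), Theorem~\ref{theorem:layout_constraints_psystem} (the layout constraints form a $p$-system with $p=2$), and then Lemma~\ref{theorem:general_approximation_ratio} to obtain the $1/(p+1)=1/3$ bound. Your additional bookkeeping remarks (identifying $\mathcal{I}$ with the constraint set of Equation~\eqref{eq:timeline_optimization} and checking non-negativity) are sound and only make explicit what the paper leaves implicit.
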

\vspace{-2mm}
\eat{\vspace{-1.0\baselineskip}
\begin{proof}
We showed that \relevance~is a monotone submodular set
function  in Theorem~\ref{theorem:submodular_objective}.
The temporal constraints 
\temporalconstraints~form a $p$-system for
$p=2$ as stated in Theorem~\ref{theorem:layout_constraints_psystem}. 
Finally, invoking Lemma~\ref{theorem:general_approximation_ratio} concludes the proof.
\end{proof}
}

\subsection{Zooming in or out of the Timeline}
\label{subsec:zoom}


We have proposed an efficient algorithm for generating timelines.
This efficiency (in particular, the
``lazy greedy'' property that allows us to re-use previously computed values)
enables us to quickly (re)compute the optimal timeline
if the user chooses to dynamically zoom in or out of a specific time period. 
In practice, we observe running times roughly linear in the number of events taking a few hundred milliseconds which is much faster than the quadratic theoretical worst case bound.
An example was given in Figure~\ref{fig:timeline_rdjr_zoom},
where we show the timeline
for the most recent few years of Robert Downey Jr.'s life
(\cf, Figure~\ref{fig:timeline_rdjr_full}).

The default interval for each timeline is computed as follows:
we choose the shortest time
period that covers at least 90\% of all generated events (restricted
to the lifetime of the entity).
Note that for person entities, this time period usually corresponds to less than 90\% of their lifetime
based on the intuition that most interesting events happen to people
after they grow up, but before they retire.
(See Section~\ref{sec:future} for a discussion of when this
heuristic can fail.)



\vspace{-0.5\baselineskip}
\section{Evaluation}
\label{sec:evaluation}

In this section, we evaluate the quality of our method for producing timelines\footnote{
A demo is available at\\
\url{http://cs.stanford.edu/~althoff/timemachine}
}.
Since there is no ground truth to compare to, we asked Amazon Mechanical Turk 
raters
to vote for their preferred timeline.
We do this in a series of paired comparisons
in which we vary one component of the algorithm at a time,
resulting in six different models\footnote{
  Experiments on varying $w_1, w_2, w_3$ show that the results are insensitive to the exact parameter values as long as $w_1 \gg w_2 \gg w_3$ (\cf, backoff-smoothing \cite{Katz87estimationof}, see Section \ref{subsubsec:entity_relevance}).
},
summarized in Table~\ref{tab:experiments}.
The results are shown in 
Table~\ref{table:evaluation_results}
and Figure~\ref{fig:results_with_ci}, and are explained shortly.

In summary, our experiments show that (1) users always significantly
prefer our full method over baseline and state-of-the-art methods; 
(2) enforcing temporal diversity and content diversity significantly 
improves the results; and (3) both entity relevance and date relevance
contribute to generating a quality entity timeline.

\subsection{Experimental setup}
We generated timelines for
250 popular entities (75 music artists and bands, 75 actors, 50 politicians, 50 athletes)
for each of the six methods in Table~\ref{tab:experiments}.
We chose popular entities instead of random or tail entities because
evaluations cannot be trusted on entities that most raters are not at all familiar with.
Popular entities also account for the major share of the total query volume
and their large number of candidate events and often long lifespan makes timeline generation particularly challenging.
Furthermore, we chose to evaluate timelines through pairwise preferences rather than absolute quality judgments as this has often been found to be less subjective and thus more reliable~\cite{carterette2008here}.

Let $T(e,m)$ denote the timeline for entity $e$ generated by model $m$;
let $m=0$ denote the full model (the control), and let $m=1:5$ denote
one of the ablated models (experimental conditions; described in
Table \ref{tab:experiments} and the following sections).
For each entity $e$, we displayed the control timeline $T(e,0)$ and
the experimental timeline $T(e,m)$ for $m>0$, one above the other;
we randomized the decision whether the experiment or control was shown
on top.

We asked each rater which timeline they preferred, on a 5-point scale,
corresponding to strongly preferring the top one, slightly preferring the top one, being neutral, slightly preferring the bottom one, and strongly preferring the bottom one.
We also asked each rater to give qualitative comments to justify their
decision, to gain further insight.
Each pair of timelines is rated by five different raters (1154 distinct raters in total).
We encouraged raters to research each entity (\eg, using Wikipedia)
before evaluating each timeline about that entity; fortunately,
79\% of raters reported that they were already familiar with these entities.

To simplify the analysis, we collapsed the user
votes to a 3-point scale: prefer control (full model), neutral, or prefer experiment (ablated model).
Let $V(e,m,r) \in \{F, T, A\}$ be the vote by rater $r \in R$ for entity $e \in E$ and method $m \in M$,
where $F$ represents preferring the full model, $T$ represents a tie,
and $A$ represents preferring the ablated model.
Let $N(e,m,v) = |\{ r \in R : V(e,m,r) = v \}|$ 
be the number of raters who voted for category $v \in \{F, T, A\}$,
for entity $e$, and method $m$.
Let $M(e,m) = \argmax_v N(e,m,v)$ be the majority vote.

We compute {\em agreement} between raters (RAggr) as the fraction of raters agreeing with the majority vote
(including tie votes and tied majorities):
\[
\mathit{RAggr}(m) =
\frac{ |\{ V(e,m,r) = M(e,m) : e \in E, r \in R\} | }
{|\{V(e,m,r) : e \in E, r \in R\} | }
\]

We define the rater {\em preference} (RPref) for the full method
as the fraction of times the majority of raters vote for the full
method,
excluding cases where there is no clear majority;
that is, we set $M(e,m)= \mathrm{NULL}$ if the majority is not unique
(\eg, if we have 2 votes for F, 2 for A, and 1 for T):
\[
\mathit{RPref}(m) =
\frac{ |\{ M(e,m)  = F : e \in E \} | }
{|\{M(e,m) \in \{ F, A\} :  e \in E\} | }
\]
(Note that a 5:0 vote in favor of full (5 F, 0 A) is treated the same as a 3:2 vote.)
If both methods are equally good, 
we would expect both the full and the ablated model to win exactly 50\% of the time;
that is, $\mathit{RPref}(m)=0.5$ (our {\em null hypothesis}). 
This allows us to use a simple two-sided Binomial hypothesis
test of significance.
\begin{table}
\begin{center}
\resizebox{.9\columnwidth}{!}{
  \begin{tabular}{l|lll|ll|ll}
  \toprule
  \textbf{Name} & $\mathbf{w_1^e}$ & $\mathbf{w_2^e}$ & $\mathbf{w_3^e}$ 
  & $\mathbf{w_1^d}$ & $\mathbf{w_2^d}$ & \textbf{TD} & \textbf{CD} \\
  \midrule
  \textsc{Full} & 1 & $10^{-2}$ & $10^{-4}$ & 1 & $10^{-2}$ & 1 & 1 \\
  \textsc{Base} & 0 & 0 & 1 &0  & 0 & 1 & 1 \\
  \textsc{Full-E2D} & 1 & $10^{-2}$ & $10^{-4}$ &0  & 0 & 1 & 1 \\
  \textsc{Full-E2E} & 0 & 0 & $10^{-4}$ &1  & $10^{-2}$ & 1 & 1 \\
  \textsc{Full-TD} & 1 & $10^{-2}$ & $10^{-4}$ & 1 & $10^{-2}$ & 0 & 1 \\
  \textsc{Full-CD} & 1 & $10^{-2}$ & $10^{-4}$ & 1 & $10^{-2}$ & 1 & 0 \\
  \bottomrule
  \end{tabular}}
\end{center}
\vspace{-5mm}
\caption{Summary of experimental configurations.
We fix $\mathbf{\lambda=0.75}$ throughout.
TD = temporal diversity, CD = content diversity.
FULL-TD means the full model without temporal diversity, etc.
Note that  all methods remove duplicate events, which is a minimal
form of content diversity, but if CD=1, we ensure diversity amongst
types of events (entities and paths) as well; see Section \ref{subsec:eval_content_diversity} for details.
\label{tab:experiments}
}
\end{table}

\newcommand{\signone}{\textsuperscript{}}
\newcommand{\sigone}{\textsuperscript{*}}
\newcommand{\sigtwo}{\textsuperscript{**}}
\newcommand{\sigthree}{\textsuperscript{***}}

\begin{table}[t]
\begin{center}
  \resizebox{.9\columnwidth}{!}{
  \begin{tabular}{lrrll}
  \toprule
  \textbf{Ablated model} & \textbf{\#Tasks} & \textbf{\#Raters} & \textbf{RAggr} & \textbf{RPref} \\
  \midrule
  \textsc{Base} & 1250 & 344 & 77.0\% & 83.8\% \sigthree \\
  \textsc{Full-E2D} & 1250 &  463 & 75.7\% & 59.8\% \sigtwo \\
  \textsc{Full-E2E} & 1250 &  676 & 73.2\% & 64.3\% \sigthree \\
  \textsc{Full-TD} & 150 &  53 & 75.3\% & 86.7\% \sigthree \\
  \textsc{Full-CD} & 1250 &  665 & 81.0\% & 91.1\% \sigthree \\
  \bottomrule
  \sectionrule
  \multicolumn{4}{l}{\scriptsize{\textsuperscript{***}$p<0.001$, 
    \textsuperscript{**}$p<0.01$, 
    \textsuperscript{*}$p<0.05$}}
  \end{tabular}} 
\vspace{-2.5mm}
\caption{Summary of the user studies.
Each row shows an ablated version that was compared to the full model.
The asterisks represent the p-value corresponding to a Binomial hypothesis test
that compares the RPref value to 50\%.
\vspace{-3.5mm}
}
\label{table:evaluation_results}
\end{center}
\end{table}

\vspace{-.4\baselineskip}
\subsection{Baseline Algorithms}
\label{sec:baseline}

In our initial trial, we defined the baseline algorithm as follows:
rank all the candidate events by the \GtoE~global entity score,
and then show the top $K$ events (where each event is represented
by a box on a timeline of width 1000 pixels,
and we allow up to $n=2$ boxes to be stacked vertically).
However, we found that the \GtoE~score sometimes picked
the same related entity more than once
(\eg, if Robert Downey Jr. starred in {\em Iron Man} and later won an award for it,
we may display {\em Iron Man} twice, at two different time points).
Users strongly disliked this in preliminary experiments, so
we decided to augment the baseline through post-filtering all results by removing duplicate entities from all methods
(except Section \ref{subsec:eval_content_diversity}),
keeping the highest scoring event in each case (we never allow duplicate events).

In addition, we noticed that the baseline model would sometimes result in visual crowding,
since it does not enforce temporal diversity. 
Users strongly disliked this as well,
so we decided to further improve the baseline by post-processing all results and
removing temporally overlapping events, keeping the highest scoring
event in each case.

In practice, we can implement this modified baseline \textsc{Base} by using our
constrained submodular optimization algorithm, but setting
the weights so that $w_1^e=w_2^e=w_1^d=w_2^d=0$ and $w_3^e=1$,
thus putting all the emphasis on the \GtoE~signal. 
This is because the greedy optimization algorithm will ensure that it 
never adds an event that temporally overlaps an existing event.
Most of the time our submodular coverage function does not yield any duplicate entities.
In the rare case that the optimization does lead to duplicate entities,
we explicitly remove them to ensure that entity diversity has no impact on any other experiment 
(except Section \ref{subsec:eval_content_diversity} that explicitly measures the importance of content diversity).


We can see from the results
(Table~\ref{table:evaluation_results} and Figure~\ref{fig:results_with_ci})
that on average, 84\% of the time raters prefer our full model 
(significant difference at $p<0.001$ according to a Binomial test).
This shows that a global relevance score is inadequate, even when augmented by temporal diversity and content diversity.

\begin{figure}[t]
  \centering
  \includegraphics[width=\linewidth]{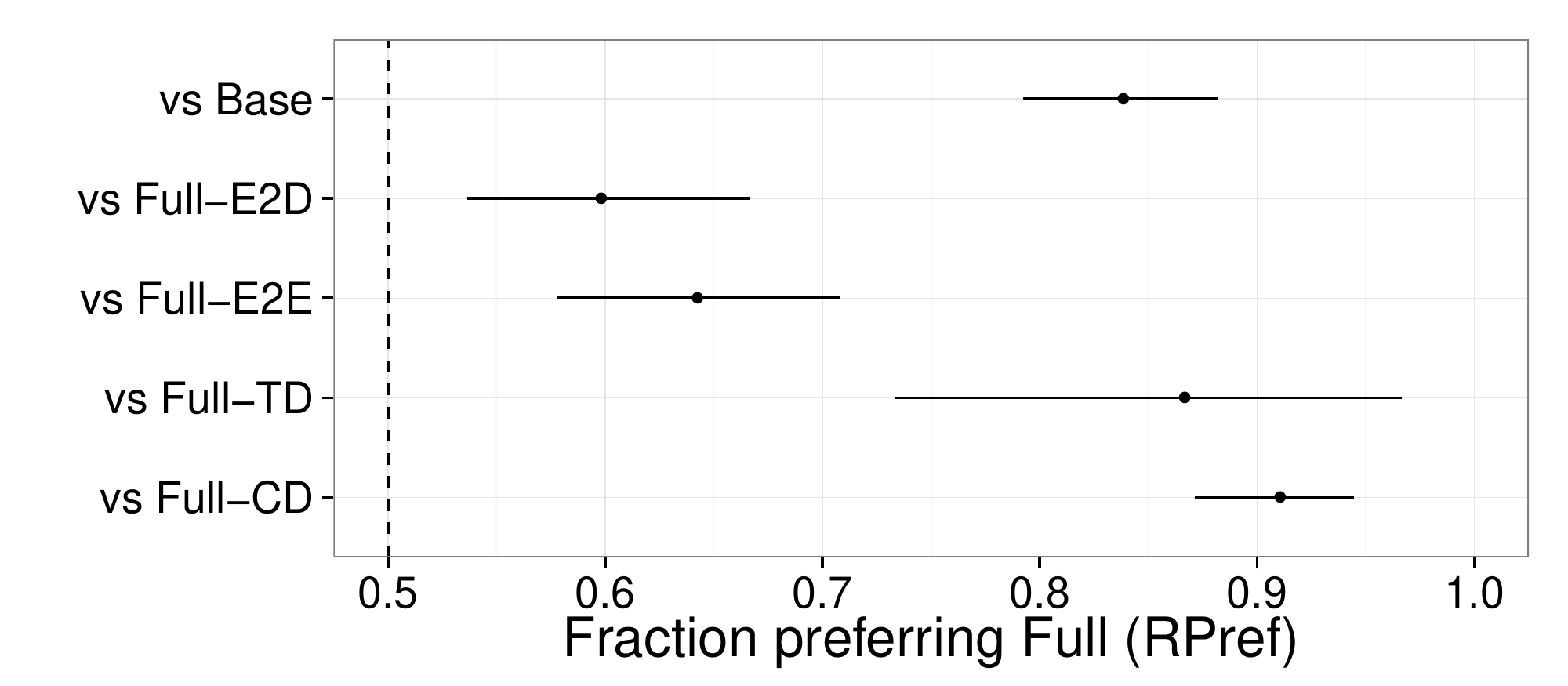}
  \vspace{-8mm}
  \caption{Fraction of entities for which raters preferred Full approach over ablated version (RPref) along with 
bootstrapped 95\% confidence intervals.
Results show significant preference for our proposed approach in all cases.
\vspace{-1mm}
}
  \label{fig:results_with_ci}
\end{figure}

The only other relevant baseline known to us is the CATE system described in~\cite{tuan2011cate}.
CATE ranks related entities by co-occurrence with the timeline entity within documents of a given context. 
This approach is very similar to the \EtoE~approach described in Section~\ref{subsubsec:entity_relevance}. 
We consider our \EtoE~signal as an improvement over the CATE baseline since, 
first, we only consider more direct connections between entities in a KB compared to co-occurrence within the same document.
Second, our relevance signal captures co-occurrence on a large web corpus within a small window which gives higher coverage and more focus compared to document-wide co-occurrence within Wikipedia only.
And third, we perform subset selection instead of a static ranking
which allows selected events to influence which other events are
selected next.
Section~\ref{subsec:eval_relevance_signals} shows that our methods
outperform \EtoE.

\subsection{Evaluating Relevance Signals}
\label{subsec:eval_relevance_signals}

To compare the  different ways of measuring event relevance,
we performed two experiments.
First, we ``turned off'' the date signal \daterelevance, by setting
$w_1^d=w_2^d=0$. We call this model Full-E2D, meaning the full model
without the E2D signal. Raters prefer our full model to this version about 60\%
of the time, which is a significant difference at the level of $p<0.01$
(Binomial test).

The utility
of the date signal depends on which kind of entity we are creating a
timeline for. For people, it is common to find the date of 
birth, death, marriage or other key events to be explicitly mentioned on the web;
this makes it relatively easy to determine that these events are
important.\footnote{Of course, our system treats birth and death dates as special, since
they inform the beginning and end of the timeline for a person (see Section \ref{subsec:zoom}).}

Second, we ``turned off'' the \entityrelevance~signal, 
by setting $w_1^e=w_2^e=0$. We call this model Full-E2E.
Raters prefer our full
model to this about 64\% of the time, indicating that the E2E signal is somewhat
more important than E2D (significant at $p<0.001$). 

However,
the benefit of the E2E signal varied by domain/vertical: we found it
most useful for actors and athletes, 
whereas for musicians, the E2D signal was more helpful.
We attribute this to conventions in what entities and dates are co-mentioned on the web (in close proximity).
E2D works well for music artists because important dates such as album release dates and tour dates are frequently mentioned across many websites (online stores, ticketing websites, etc.).
However, this is different in the movie domain.
There are many more entities related to the movie (director, producer, dozens of actors, etc.) and only a few of them will be highlighted in close proximity to the movie release date (usually one or two star actors). 
How helpful the E2D signal is depends on what usually gets mentioned in close proximity of the date,
which is subject to certain conventions and marketing decisions.
For instance, the first {\em Pirates of the Caribbean} movie (2003) has a lower E2D score for actor Johnny Depp
than later sequels even though the first movie was the bigger milestone for Johnny Depp's career.
The sequel promotions just featured Johnny Depp (who had gained in popularity) more prominently.
We found the \EtoE~signal to be more generally applicable and less influenced by such effects.

We further found that the \EtoD~signal has less utility when events do not exhibit a clear temporal focus such as long military conflicts (compared to birth/death/marriage dates or concert tours).
In these cases, the \EtoE~signal is helpful in providing additional information in the event selection phase.

\subsection{Evaluating Temporal Diversity}
\label{subsec:eval_temporal_diversity}

As we mentioned in Section~\ref{sec:baseline}, users strongly dislike
when displayed events overlap in time, since it is not easy to see the corresponding
images and descriptions. 
Indeed, we see that in 86\% of the experiments,
raters prefer our full model over 
an ablated version, which we call Full-TD,
that maximizes relevance and content
diversity but without any temporal constraint 
(significant at $p<0.001$).
This is despite the fact
that the ablated model also includes the simple overlap filter we
described in Section~\ref{sec:baseline}.
The number of events we show is controlled and set to the number of events in our Full approach,
as we aim to measure the impact of temporal diversity while controlling for the amount of information shown
(though the overlap filter may remove some of them).
The reason the full model works better is that it can take into
account the temporal overlap during the optimization process, so if
one event is removed, another (non-overlapping) event can be added instead.

\subsection{Evaluating Content Diversity}
\label{subsec:eval_content_diversity}

As we mentioned in Section~\ref{sec:baseline}, users strongly dislike
when the same entity is repeated (with different timestamps), so we
always remove such cases from all methods.
Here, we quantify the importance of content diversity in generating timelines.
Note that in addition to entity diversity there are other, 
slightly more subtle forms of content diversity that we might
wish to consider. For example, we might not want to list only the
different movies that an actor has been in, even if they all have high relevance
scores; instead we wish to include awards, TV shows, and personal relationships as well.
Our submodular set cover objective captures this by using the \EtoEPath\ 
feature, which gives higher score to a set of events with distinct path types
 (see Section~\ref{subsubsec:entity_relevance}).

To evaluate this, we consider an alternative model
in which we evaluate the score by summing over the multiset (rather
than set) of related entities (or paths to related entities),
allowing for duplicate entities or paths during the optimization
process; we call this Full-CD. We see that raters prefer our full
model 91\% of the time compared to this ablated model 
(significant at $p<0.001$).
Again, we attribute this to the fact that the full model is aware of
the penalty for duplication during the optimization process, and can
adjust its output appropriately.

\eat{
\subsection{Summary}
\label{subsec:eval_summary}
\vspace{-1\baselineskip}
\todo{
\begin{itemize}
\item We evaluate timeline quality through a series of paired comparisons 
      in which we vary one component of the algorithm at a time 
      and ask raters for relative judgments.
\item A static ranking based on a global relevance score is inadequate, 
      even when augmented by temporal diversity and content diversity.
\item The \EtoE~and the \EtoD~signal lead to significant performance improvements 
      but are outperformed by the full model representing a combination of both signals.
\item Raters show a very strong preference for temporal diversity as well as content diversity.
\end{itemize}
}
\vspace{-1\baselineskip}
}

\vspace{-0.5\baselineskip}
\section{Related Work}
\label{sec:relatedwork}


There has been much
work on extracting temporal events from text
\cite{ji2013tackling,ling2010temporal},
 and in summarizing large text corpora
such as tag streams \cite{dubinko2007visualizing}, news corpora \cite{do2012joint}, Wikipedia biographies \cite{bamman2014unsupervised}, and Wikipedia edit histories \cite{tran2014wikipevent}.
There has also been work on mining temporal patterns across such textual data sets
\cite{huet2013mining,suchanek2014semantic,weikum2011longitudinal}. 

Another body of related work concerns document summarization
\cite{allan2001temporal,sparck2007automatic}.
The evaluation of summarization approaches has always been challenging,
and measures like Rouge \cite{lin2004rouge} are often used if ground truth summaries are available.
In our case, we use paired comparisons, since we do not have ground truth.

The summarization and IR communities have identified diversity as an important quality criterion \cite{carbonell1998use,lin2012learning}.
More recently, research has focused on complementing traditional corpus-based relevance measures with signals such as social attention \cite{zhao2013timeline}.
Early work on timeline generation by Swan and Allan \cite{swan2000automatic} attempted to summarize a news corpus by displaying major events and topics along a timeline. 
In a similar spirit, Shahaf et al. \cite{shahaf2013information} have created maps of information that summarize complex storylines across news documents.
Similar techniques have been applied to scientific literature \cite{shahaf2012metro,sipos2012temporal}.
Our paper extends this line of work by using multiple relevance signals 
(based on web co-occurrence), 
as well as showing that content and temporal diversity are critical for quality timelines.

Submodular optimization has been shown to be a powerful framework for summarization \cite{kannan2014mining,lin2012learning,shahaf2012metro,shahaf2013information,sipos2012temporal},
since it naturally captures notions of diversity through its diminishing returns properties \cite{dasgupta2013summarization}.
Furthermore, there are 
efficient approximation algorithms with theoretical guarantees
\cite{ahmed2012fair,calinescu2011maximizing,krause2012submodular,leskovec2007cost,minoux1978accelerated,nemhauser1978analysis}.

Some recent work has focused on generating personalized timelines based on Facebook \cite{graus2013yourhistory} or Twitter feeds \cite{li2014timeline}.
Timelines generated based on information from KBs have been considered in \cite{mazeika2011entity,tuan2011cate,wang2010timely}; 
these papers are the ones most related to our approach.
However, there are several differences.
First, 
\cite{mazeika2011entity,wang2010timely} do not consider a ranking of individual events
(required when space is limited) nor visual space constraints, so there is no optimization algorithm involved. 
Instead, they simply display all events which is not an option in our context as each timeline entity might have hundreds of candidate events (see Figure~\ref{fig:path_coverage}).
We have empirically shown that it is absolutely necessary to address relevance, redundancy, and space constraints to generate quality timelines.
Second, \cite{tuan2011cate} considers ranking related entities but uses a different notion of relatedness (sharing many contexts rather than more direct connections in the KB). 
In this approach, it is impossible to capture relationships between selected events as the ranking is static.
To the best of our knowledge, this is the only relevant baseline and we show in Section~\ref{subsec:eval_relevance_signals} that our proposed method outperforms an improved reimplementation of this approach.
Third, none of these papers conducts any quantitative evaluation of their timelines.

Finally, we should mention that
Bing \cite{BingTimeline} has released a system called ``Timeline'' that is somewhat similar to ours.
However, there are (to the best of our knowledge) no published accounts of how their system works.
Furthermore, their timelines are static, and do not allow the user to interact with the timeline, a feature which we consider to be very important, especially for mobile browsing.

\vspace{-0.5\baselineskip}
\section{Future work}
\label{sec:future}


In this section, we suggest some directions for future work,
based in part
on the comments written by the raters.

{\bf Choosing a better default timespan.}
As we discussed in Section~\ref{subsec:zoom}, the algorithm
picks a default time span for a person that covers 90\% of their generated life events.
However, sometimes this is suboptimal.
For example, consider the US president John F Kennedy:
many important events occurred in the last few years of his life
(assassination, presidency, Cuban missile crisis, Bay of Pigs invasion, \etc). 
Our default timespan misses many of these.
In particular, his assassination, his presidency, and his marriage to
Jacqueline Kennedy Onassis are chosen first,
and these then block other important events such as Cuban Missile
Crisis or his involvement in the Vietnam war.

We address this problem by allowing the user to zoom in to the appropriate period.
Other potential solutions include 
using the E2D scores to weight some time periods higher than others, 
and including the search over suitable time periods as part of the
optimization.

{\bf Time points vs intervals.}
Our algorithm represents events based on a single point in time.
However, some events (\eg, wars) are more naturally associated with
intervals. Currently our method may pick the start or end of a war,
but might not show both, due to the diminishing returns property.
This could be fixed by modifying the algorithm to reason about
temporal intervals.

{\bf Choosing how to describe an event.}
Sometimes a related entity is connected to the subject via many different
paths, and all have the same timestamp. In this case, it is hard to
know which relation to show to the user.
For example, the system sometimes describes a date
associated with someone's death as the end of their marriage; while
technically true, this is rather unintuitive.
Another example concerns US presidents: sometimes such people are
described as being a military commander.
Again, while technically true (since the US president is also
the Chief of the Armed Forces), this is unintuitive to users.
We may be able to fix this problem by learning a ranking model
applied to particular candidate values for any given subject and
relation
or by influencing the way the data is curated.

{\bf User preferences and subjectivity.}
In some cases, raters did not agree on which timeline was best.
The reason often seems to boil down to individual preferences.
In our experiments, the biggest area of disagreement is over how much
the timeline should be focused on
professional life (\eg, jobs, albums, books) vs personal life and
relationships (\eg, marriage, children).
Users had different opinions, even for the exact same timeline subjects, which illustrates the need for personalization in this space.
One approach would be to  distinguish between
professional and personal events, and to allow some trade-off parameter
between them.

{\bf Extractive vs abstractive summarization.}
Our current approach to building a timeline is similar to ``extractive
summarization'' techniques in the NLP community, in the sense that we
select a set of events from a candidate pool.
However, sometimes this is suboptimal, since the relationship between
two entities may be more complex.
For example, Robert Downey Jr.'s father (Robert Downey Sr.) shows up on his timeline,
but is described being a co-star in a movie rather than being his
father. While technically correct, it would be more satisfying to
create an abstract summarization of the relationship, describing that 
Robert Downey Sr. is both the father and a co-star.
We leave this to future work.

\eat{
{\bf Using machine learning}. In the future, we might try to learn
the parameters of the model from data (\eg, pairwise judgments from human raters). 
(We did not do this in the  current paper due to lack of training data.)
Such machine learning approaches could also involve personalization, for instance to trade off professional career content versus personal life content individually for each user.
}

{\bf Creating timelines for collections}.
In the future, we would like to go
beyond timelines for single entities, and derive a
method to summarize collections of entities (\eg, 1930s jazz artists),
periods of time (1920s in the U.S.), or long-lasting events (World War II).



\vspace{-1.0\baselineskip}
\section{Conclusions}
\label{sec:conclusion}

We presented a system called \systemname\ for automatic timeline generation for entities in a knowledge base.
The timeline generation problem is formulated in a submodular optimization framework that jointly optimizes for relevance, content diversity and temporal diversity.
Web-based co-occurrence signals are used to determine the relevance of other entities and dates to the timeline subject.
We proved that an efficient greedy approximation algorithm achieves near-optimal performance.
The proposed approach is evaluated through a comprehensive series of user studies demonstrating that both temporal diversity and content diversity are crucial, and that web-based co-occurrence signals significantly improve over a baseline model that relies on global importance.

\xhdr{Acknowledgments}
We thank Evgeniy Gabrilovich for many helpful discussions, 
Arun Chaganty, Stefanie Jegelka, Karthik Raman, Sujith Ravi, and Ravi Kumar for their insights on submodular optimization,
Jeff Tamer and Patri Friedman for their support with the user studies,
Danila Sinopalnikov and Alexander Lyashuk for their help with the co-occurrence pipeline,
and Jure Leskovec, David Hallac, Caroline Suen, and the anonymous reviewers for their valuable feedback.


\vspace{-1.5\baselineskip}
{\small
\bibliographystyle{abbrv}
\vspace{1mm} 
\bibliography{refs}
\enlargethispage{1\baselineskip} 
}

\newpage
\normalsize
\section{Appendix}
\label{sec:appendix}

\subsection{Proof of Approximation Guarantee}
We repeat the definition from Section \ref{subsec:temporal_constraint}: A set $T \subseteq E$ of events satisfies the layout constraint $\temporalconstraints(T, E, \mathcal{W}, w, n)$  if
\begin{align}
\forall t\in \mathds{R}: |T \sqcap [t, t+t_w) | \leq n \tag{\ref{eq:LayoutConstraint}}
\end{align}

This can be checked very efficiently because of the following equivalent formulation:
Let $\tFun(Y) = \{ \tFun(y) \;|\; y \in Y \}$ be the set of timestamps for all events
in some subset of events $Y \subseteq E$.

\begin{theorem}{Equivalence of layout constraints}
\eqref{eq:LayoutConstraint} $\Leftrightarrow$~\eqref{eq:LayoutConstraintAlternative}
\begin{align}
\forall S\subseteq T, |S| \geq n+1: \max(\tFun(S)) - \min(\tFun(S)) \geq t_w \label{eq:LayoutConstraintAlternative}
\end{align}
\end{theorem}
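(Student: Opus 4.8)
The plan is to prove the equivalence by establishing the two implications separately, in each case arguing by contraposition, since both sides are universally quantified and their negations are easy to witness. The single observation driving both directions is that a finite nonempty set of timestamps $\tFun(S)$ fits inside some half-open window $[t, t+t_w)$ of width $t_w$ exactly when its span $\max(\tFun(S)) - \min(\tFun(S))$ is strictly less than $t_w$. Once this is pinned down, each direction is a one-line contradiction.

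For the forward direction \eqref{eq:LayoutConstraint} $\Rightarrow$ \eqref{eq:LayoutConstraintAlternative}, I would assume \eqref{eq:LayoutConstraint} and take an arbitrary $S \subseteq T$ with $|S| \geq n+1$; suppose for contradiction that $\max(\tFun(S)) - \min(\tFun(S)) < t_w$. Setting $t = \min(\tFun(S))$, every $e \in S$ satisfies $t \leq \tFun(e) \leq \max(\tFun(S)) < t + t_w$, so $\tFun(e) \in [t, t+t_w)$ and hence $S \subseteq T \sqcap [t, t+t_w)$. This forces $|T \sqcap [t, t+t_w)| \geq n+1$, contradicting \eqref{eq:LayoutConstraint}.

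For the converse \eqref{eq:LayoutConstraintAlternative} $\Rightarrow$ \eqref{eq:LayoutConstraint}, I would assume \eqref{eq:LayoutConstraintAlternative} and suppose some $t$ violates \eqref{eq:LayoutConstraint}, i.e.\ $|T \sqcap [t, t+t_w)| \geq n+1$. Choosing any $S \subseteq T \sqcap [t, t+t_w)$ with $|S| = n+1$, every timestamp in $S$ lies in $[t, t+t_w)$, so $\min(\tFun(S)) \geq t$ and $\max(\tFun(S)) < t + t_w$, whence the span is strictly below $t_w$, contradicting \eqref{eq:LayoutConstraintAlternative}.

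The only point requiring genuine care, and the one I expect to be the main (if minor) obstacle, is the treatment of the half-open interval $[t, t+t_w)$: the strictness at the right endpoint is precisely what makes ``fits in a window'' ($< t_w$) line up with the threshold $\geq t_w$ in \eqref{eq:LayoutConstraintAlternative}, and I would want to check the boundary case where the span equals $t_w$ exactly, in which no admissible window contains all points (so such an $S$ is correctly allowed). Possibly coincident timestamps are harmless, since both constraints count \emph{events} rather than distinct timestamp values; thus $|S| \geq n+1$ events may realize fewer than $n+1$ values of $\tFun$ without affecting either argument.
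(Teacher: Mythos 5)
Your proof is correct and follows essentially the same route as the paper's: both directions are argued by contradiction/contraposition, using $t=\min(\tFun(S))$ as the window's left endpoint in one direction and taking $S$ inside $T \sqcap [t, t+t_w)$ in the other, with the strict inequality at the half-open right endpoint doing the same work in both. The only cosmetic difference is that you trim $S$ to exactly $n+1$ elements where the paper keeps the whole intersection; nothing hinges on this.
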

\vspace{-3mm}
\begin{proof}
1. Show $\neg$~\eqref{eq:LayoutConstraintAlternative} $\Rightarrow$ $\neg$~\eqref{eq:LayoutConstraint}:
\begin{align*}
\exists S\subseteq T, |S|\geq n+1: \max(\tFun(S)) - \min(\tFun(S)) < t_w
\end{align*}
Let $S^*$ be a witness for such a set $S$, 
Let $t' = \min(\tFun(S^*))$ and $t'' = \max(\tFun(S^*))$ (note that $t'+t_w > t''$). 
Then with $S^* = S^* \sqcap [t', t'') = S^* \sqcap [t', t'+t_w) \subseteq T$ 
\begin{align*}
& |S^* \sqcap [t', t'+t_w) | >= n + 1\\
\Rightarrow& |T \sqcap [t', t'+t_w) | >= n + 1\\
\Rightarrow& \exists t\in \mathds{R}: |T \sqcap [t, t+t_w) | >= n + 1 \;\; \lightning
\end{align*}

2. Show $\neg$~\eqref{eq:LayoutConstraint} $\Rightarrow$ $\neg$~\eqref{eq:LayoutConstraintAlternative}
\begin{align*}
\exists t\in \mathds{R}: |T \sqcap [t, t+t_w) | >= n + 1
\end{align*} 
Let $t^*$ be one such $t$. Define $S^* = T \sqcap [t^*, t^*+t_w)$.
\begin{align*}
\exists S^*\subseteq T, |S^*|\geq n: \max(\tFun(S^*)) - \min(\tFun(S^*)) < t_w \;\; \lightning
\end{align*} 
\end{proof}

In order to show that we can get a good approximation to the optimal solution to Equation~\eqref{eq:timeline_optimization} (see Section~\ref{subsec:timeline_optimization}) through an efficient greedy algorithm 
we now show that the independence family $(E,\mathcal{I})$, where $T \in \mathcal{I}$ if $T$ satisfies Equation~\eqref{eq:LayoutConstraint} ($T \subseteq E$), is a $p$-system for $p=2$.

Let $J_{\max}$ and $J_{\min}$ be bases of $Y \subseteq E$ with $|J_{\max}| =$\\
$\max_{J:J \text{ is a base of } Y} |J|$ and $|J_{\min}| = \min_{J:J \text{ is a base of } Y} |J|$ 
(recall Section \ref{subsec:submodular_function_maximization}).
We choose $Y \subseteq E$ instead of $T \subseteq E$ in this definition to highlight that the $p$-systems property (Equation \eqref{eq:p_systems_property}) has to hold for all subsets of events and not only for the timeline subset $T$.

\begin{definition}
We say an element $a \in E$ is \textit{blocked} in independent set $T$
(below this will be either $J_{\min}$ or $J_{\max}$)
if it could not be added to $T$ without violating the layout constraint (Equation~\eqref{eq:LayoutConstraint}). 
Formally, this means $T \in \mathcal{I}$ but $T \cup \{a\} \notin \mathcal{I}$.
\end{definition}

\begin{definition}
We define a $2 \, t_w$ \textit{ball} around $e \in E$ as the following interval: $ball(e) = (\tFun(e)-t_w, \tFun(e)+t_w)$.
\end{definition}

\begin{lemma}
If $a$ is blocked in $T$ we have $(T \sqcap ball(a)) \geq n$.
\end{lemma}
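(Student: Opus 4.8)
The plan is to unpack the definition of ``blocked'' into a statement about a single witnessing window $[t, t+t_w)$ and then to show that this window sits inside $ball(a)$, so that the count of $T$ inside the larger set $ball(a)$ inherits the lower bound. Throughout I read the conclusion as the cardinality statement $|T \sqcap ball(a)| \geq n$.

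First I would translate the blocking hypothesis. Since $a$ is blocked in $T$ we have $T \in \mathcal{I}$ but $T \cup \{a\} \notin \mathcal{I}$. By the layout constraint~\eqref{eq:LayoutConstraint}, the failure of $T \cup \{a\}$ means there is some $t \in \mathds{R}$ with $|(T \cup \{a\}) \sqcap [t, t+t_w)| \geq n+1$. Because $T$ itself is independent we have $|T \sqcap [t, t+t_w)| \leq n$, and adding the single element $a$ can raise the count of any window by at most one. Hence the inequality must be tight: $|T \sqcap [t, t+t_w)| = n$, and $a$ must actually contribute to that window, i.e. $\tFun(a) \in [t, t+t_w)$. (Implicitly $a \notin T$, since otherwise $T \cup \{a\} = T \in \mathcal{I}$.)

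Second I would show that this witnessing window is contained in the ball. From $\tFun(a) \in [t, t+t_w)$ we get $t \leq \tFun(a) < t + t_w$, equivalently $\tFun(a) - t_w < t \leq \tFun(a)$. Thus every point of $[t, t+t_w)$ has timestamp at least $t > \tFun(a) - t_w$ and strictly less than $t + t_w \leq \tFun(a) + t_w$, so $[t, t+t_w) \subseteq (\tFun(a) - t_w, \tFun(a) + t_w) = ball(a)$. Since $\sqcap$ is monotone under interval inclusion, $T \sqcap [t, t+t_w) \subseteq T \sqcap ball(a)$, and therefore $|T \sqcap ball(a)| \geq |T \sqcap [t, t+t_w)| = n$, which is the claim.

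The only slightly delicate point, and the one I would treat as the main obstacle, is the tightness argument in the first step: one must argue that blocking $a$ forces some window to contain exactly $n$ elements of $T$ \emph{and} that $\tFun(a)$ lies in that same window, rather than concluding merely that some window already contained $n$ elements on its own. This is exactly where the observation that adding $a$ increases each window count by at most one does the work, and it is what lets the window be centered closely enough around $\tFun(a)$ to be absorbed into $ball(a)$.
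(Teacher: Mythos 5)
Your proof is correct and follows essentially the same route as the paper's, which compresses the whole argument into one sentence: a blocked $a$ forces a window $[t,t+t_w)$ of width $t_w$ containing $\tFun(a)$ with at least $n$ elements of $T$, and that window lies inside $ball(a)$. Your write-up simply makes explicit the two steps the paper leaves implicit (the tightness/membership argument and the containment $[t,t+t_w) \subseteq ball(a)$), so there is nothing to add.
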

\vspace{-5mm}
\begin{proof}
There must be some interval of size
$t_w$ containing $a$ that has at least $n$ elements,
otherwise $a$ would not be blocked and could be added to $T$ while
remaining an independent set. 
\end{proof}

\begin{definition}
We define the \textsc{DeletionStep} algorithm as follows:
Input: $J_{\min}, J_{\max}$\\
Output: $J_{\min}', J_{\max}'$
\begin{enumerate}
\item Let $b$ be the minimum element in $J_{\min}$. 
Let $J_{\min}' = J_{\min} \setminus \{b\}$; that is, we delete $b$.
\item Let $a_1, \dots, a_n$ be all elements in $J_{\max} \sqcap ball(b)$ in sorted order (i.e. $a_1 \leq a_2 \leq \dots \leq a_n$).
Let $J_{\max}' = J_{\max} \setminus \{a_1, a_2\}$; that is, we delete the first two elements 
(if both elements exists, otherwise just delete one or zero elements).
\end{enumerate}
\end{definition}
This algorithm is helpful in proving Theorem~\ref{theorem:layout_constraints_psystem}
for the following reason. 
If we can delete all elements from $J_{\max}$ eventually using multiple iterations of \textsc{DeletionStep} 
then $J_{\max}$ can be at most twice as large as $J_{\min}$ (which is exactly what we need to show).

\begin{definition}
We define the following \textit{invariant}:
Let $J_{\min}$ and $J_{\max}$ be minimal and maximal bases of $Y$.
\begin{align}
\forall a \in J_{\max}: J_{\min} \sqcap ball(a) \neq \emptyset \label{eq:DeletionInvariant}
\end{align}
This invariant captures the necessary condition to be able to delete every element in $J_{\max}$ eventually because for every such element we have an element in $J_{\min}$ that could still delete it. 
Next we show that this condition is actually invariant under the previously defined \textsc{DeletionStep}.
This means that we will be able to delete every element in $J_{\max}$ eventually (used in the proof of Theorem~\ref{theorem:layout_constraints_psystem}).
\end{definition}

\begin{figure}[t]
  \centering
  \includegraphics[width=0.8\linewidth]{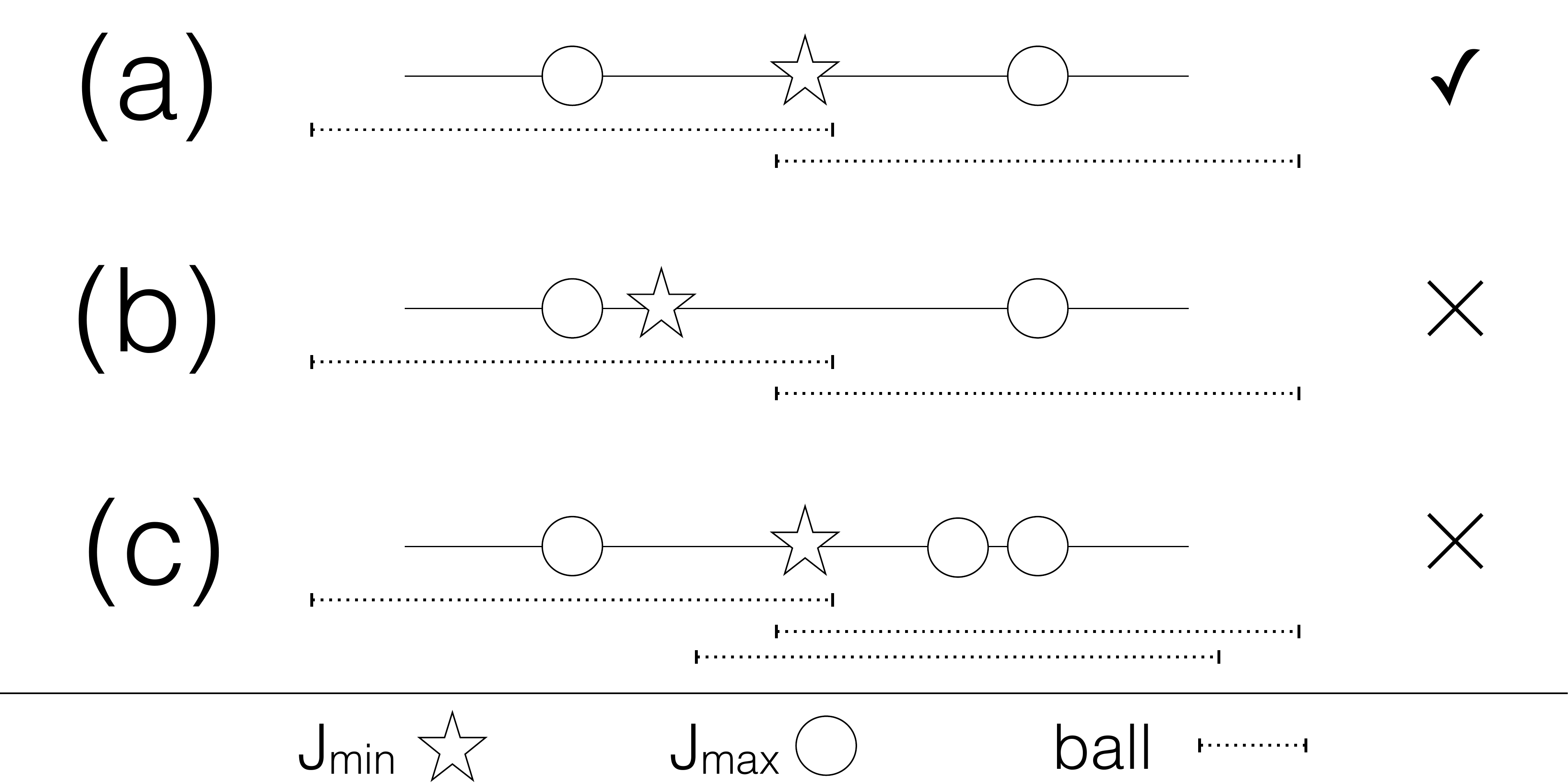}
  \vspace{-2mm}
  \caption{Intuition for proof of Lemma \ref{Lemma:DeletionInvariance} ($n=1$ for simplicity). 
  Case (a) is possible while (b) and (c) show impossible situations (see proof for details).
}
  \label{fig:proof_intuition}
\end{figure}

Intuitively, this invariant holds because we can never have an element
in $J_{\max}$ 
which is not covered in the sense of the invariant,
 since then either (1)
$J_{\min}$ has too few points to be a maximal independent set (base),
or (2) $J_{\max}$ has too many points to be an independent set.
This intuition is illustrated in Figure \ref{fig:proof_intuition} (for $n=1$).
Case (a) shows a possible situation in which each element in
$J_{\max}$ has an element in $J_{\min}$ ``in range'' and both
$J_{\min}$ and $J_{\max}$ are valid bases. 
Case (b) illustrates an impossible case in which there is no
corresponding element in $J_{\min}$ for the rightmost element of
$J_{\max}$.  
In this case the rightmost element could be added to $J_{\min}$ while remaining an independent set so $J_{\min}$ cannot be a base.
Case (c) illustrates a different impossible case in which $J_{\max}$
has too many points to be an independent set.
We have $n=1$ but there are two points in close proximity.

\begin{lemma}
\label{Lemma:DeletionInvariance}
Let $J_{\min}$ and $J_{\max}$ be minimal and maximal bases of $Y$.
Let $J_{\min}'$ and $J_{\max}'$ be the sets obtained from $J_{\min}$ and $J_{\max}$ after one or more iterations of \textsc{DeletionStep}.
The invariant (Eq.~\ref{eq:DeletionInvariant}) holds initially and after any iteration of \textsc{DeletionStep}.
\end{lemma}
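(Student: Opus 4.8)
The plan is to prove the invariant~\eqref{eq:DeletionInvariant} by induction on the number of iterations of \textsc{DeletionStep}: first establish it for the original bases $J_{\min},J_{\max}$ of $Y$ (the base case), then show a single application of \textsc{DeletionStep} preserves it. Throughout I read the invariant for the \emph{current} pair as ``every surviving element of $J_{\max}$ still has a witness'', i.e. some element of $J_{\min}$ lying in its $ball$. The whole argument rests on the earlier lemma that a blocked element $a$ satisfies $|J_{\min}\sqcap ball(a)|\ge n$, together with the symmetry $b\in ball(a)\Leftrightarrow a\in ball(b)$.

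For the base case, fix $a\in J_{\max}\subseteq Y$. If $a\in J_{\min}$ then $a\in J_{\min}\sqcap ball(a)$, since $\tFun(a)$ is the centre of $ball(a)$, so the intersection is nonempty. Otherwise $a\in Y\setminus J_{\min}$, and because $J_{\min}$ is a base (a maximal independent subset of $Y$) the element $a$ is blocked in $J_{\min}$; the blocking lemma then gives $|J_{\min}\sqcap ball(a)|\ge n\ge 1$, so the intersection is again nonempty. This is exactly case (b) of Figure~\ref{fig:proof_intuition}: an uncovered element of $J_{\max}$ could be added to $J_{\min}$, contradicting its maximality.

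For the inductive step, assume the invariant for $(J_{\min},J_{\max})$ and apply \textsc{DeletionStep}, which removes $b=\min J_{\min}$ and the two smallest elements $a_1,a_2$ of $J_{\max}\sqcap ball(b)$. Take any $a\in J_{\max}'$. If $a$ has a witness other than $b$, that witness survives in $J_{\min}'$ and we are done, so the only danger comes from ``orphans'' whose unique witness was $b$, i.e. $J_{\min}\sqcap ball(a)=\{b\}$. Such an $a$ has $b\in ball(a)$, hence $a\in ball(b)$, so every orphan lies in $J_{\max}\sqcap ball(b)$; moreover, writing $b_2$ for the second-smallest element of $J_{\min}$, an orphan must satisfy $\tFun(a)\le \tFun(b_2)-t_w$ (else $b_2$ would be a second witness), whereas every non-orphan of $J_{\max}\sqcap ball(b)$ has strictly larger timestamp. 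Thus the orphans are precisely the smallest elements of $J_{\max}\sqcap ball(b)$, and it suffices to show there are at most two of them, for then they lie in $\{a_1,a_2\}$ and are deleted.

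The main obstacle is this final count, and it is where both maximality and independence enter, matching cases (b) and (c) of Figure~\ref{fig:proof_intuition}. For $n=1$ it is immediate: $ball(b)$ is an open interval of width $2t_w$ and, by~\eqref{eq:LayoutConstraintAlternative}, any two elements of the independent set $J_{\max}$ are at least $t_w$ apart, so $|J_{\max}\sqcap ball(b)|\le 2$ and the two deletions remove all of $J_{\max}\sqcap ball(b)$, orphans included. For $n\ge 2$ I would instead argue that an orphan $a\ne b$ cannot exist: if $a\notin J_{\min}$ it is blocked, so the blocking lemma forces $|J_{\min}\sqcap ball(a)|\ge n\ge 2$, contradicting that $b$ is its unique witness; and if $a\in J_{\min}$ then $a\in J_{\min}\sqcap ball(a)=\{b\}$ forces $a=b$. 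Hence the only possible orphan is $b$ itself, and since no $J_{\max}$ element can sit to the left of $b$ inside $ball(b)$ without again contradicting the blocking lemma, $b$ is the smallest element $a_1$ of $J_{\max}\sqcap ball(b)$ and is deleted. In every case at most two elements are orphaned and all of them are removed, so the invariant is restored; this is the delicate step to get right, as it is precisely what keeps $J_{\max}$ from exceeding twice the size of $J_{\min}$.
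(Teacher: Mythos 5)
Your base case matches the paper's ``invariant holds initially'' argument, your characterization of the dangerous elements (``orphans'' whose unique witness is $b$) as the smallest elements of $J_{\max}\sqcap ball(b)$ is correct, and your $n=1$ step is sound, since it uses only independence of $J_{\max}$, which is downward closed and hence survives deletions. The genuine gap is in the inductive step for $n\ge 2$: there you write ``if $a\notin J_{\min}$ it is blocked,'' but that inference is exactly the \emph{maximality} of $J_{\min}$ as an independent subset of $Y$, i.e.\ the property of being a base. This holds only for the original $J_{\min}$; after the first \textsc{DeletionStep} the current $J_{\min}$ is a proper subset of a base and is in general no longer maximal, so an element outside it need not be blocked and the blocking lemma gives you nothing. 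Moreover, the flaw is structural, not just a missing justification: the invariant~\eqref{eq:DeletionInvariant} alone is too weak to propagate. Take $n=2$, $t_w=1$, and (identifying events with timestamps) $J_{\min}=\{0,\,1.95\}$, $J_{\max}=\{-0.9,\,0.05,\,0.95\}$. Both sets satisfy~\eqref{eq:LayoutConstraint}, and the invariant holds because $0$ witnesses all three elements of $J_{\max}$. One \textsc{DeletionStep} removes $b=0$ and the two smallest elements of $J_{\max}\sqcap ball(0)$, namely $-0.9$ and $0.05$, leaving $J_{\min}'=\{1.95\}$ and $J_{\max}'=\{0.95\}$; since $|1.95-0.95|=t_w$ is not strictly less than $t_w$, we get $J_{\min}'\sqcap ball(0.95)=\emptyset$ and the invariant fails. (This pair cannot arise from genuine bases of a common $Y$ --- indeed $\{0,1.95\}$ is not maximal in the union --- but your inductive hypothesis has discarded precisely that information, so your induction cannot rule it out.)

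This is why the paper does not induct step by step. Its proof fixes the \emph{original} bases, assumes some surviving $a^*\in J_{\max}$ violates the invariant after one or more steps, and splits on whether $a^*$ is blocked in the original $J_{\min}$, where maximality is legitimately available. In the blocked case it then does bookkeeping over the \emph{whole deletion history}: $a^*$ had at least $n$ witnesses in the original $J_{\min}$, each of which was deleted at some step; at each such step $a^*$ was ``in range'' but not deleted, so two elements of $J_{\max}$, both at most $a^*$ and within $2t_w$ of it, were deleted instead. This places $2n$ deleted elements plus $a^*$ itself, i.e.\ $2n+1$ elements of the original independent set $J_{\max}$, inside $(\tFun(a^*)-2t_w,\,\tFun(a^*)]$, contradicting~\eqref{eq:LayoutConstraint}. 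If you want to keep your per-step structure, you would have to strengthen the inductive hypothesis so that it remembers the original bases (for instance, tracking for each surviving element of $J_{\max}$ how many of its original witnesses remain and how many deletions from $J_{\max}$ below it have occurred), at which point you have essentially reconstructed the paper's global counting argument.
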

\vspace{-3mm}
\begin{proof}
\textit{Invariant holds initially.\\}
Let $a \in J_{\max}$.
Initially, $J_{\min} \sqcap ball(a) \neq \emptyset$ since otherwise $a$ could be added to $J_{\min}$ ($a$ not blocked in $J_{\min}$) for a contradiction with $J_{\min}$ maximal independent set or base (cf. Figure~\ref{fig:proof_intuition} (b)). $\lightning$

\textit{Invariant holds after any number of \textsc{DeletionStep}.\\}
We now show by contradiction that the invariant holds after one or more iterations of
\textsc{DeletionStep}.
Let $J_{\min}$ and $J_{\max}$ be the bases before the \textsc{DeletionStep}(s),
and let $J'_{\min}$ and $J'_{\max}$ be the bases afterwards.
Assume the invariant holds before but not after.
Let $a^* \in J_{\max}$ be an element for which the invariant is now
violated,
so $J_{\min}' \sqcap ball(a^*) = \emptyset$.
We do a case analysis on $a^*$.


\textit{Case 1:} $a^*$ is not blocked in $J_{\min}$\\
If $a^*$ is not blocked in $J_{\min}$ we could add it to the base and obtain an independent set which is in contradiction with $J_{\min}$ being a base (cf. Figure~\ref{fig:proof_intuition} (b)); same as in the initial case above). $\lightning$

\textit{Case 2:} $a^*$ is blocked in $J_{\min}$\\
Element $a^*$ can only be blocked in
 $J_{\min}$ if $J_{\min} \sqcap ball(a^*) \geq n$.
Since we assume the invariant does not hold for the new bases,
we have  $J_{\min}' \sqcap ball(a^*) = \emptyset$,
so we must have deleted at least $n$ elements from $J_{\min}$ in
previous iterations of \textsc{DeletionStep}.
However, we now show this results in a contradiction.

We know element $a^*$ has not been deleted yet.
All of the $n$ or more elements in $J_{\min} \sqcap ball(a^*)$ would have in ``in range'' to delete $a^* \in J_{\max}$. 
Therefore, we must have deleted at least $2n$ elements from $J_{\max}$ previously.
These elements must have been in $(a^*-2t_w, a^*+2t_w)$, the range covered by the $n$ or more elements in $J_{\min} \sqcap ball(a^*)$.

Note that we delete elements in $J_{\max}$ in ascending order.
Since $a^*$ was not deleted yet, all deleted elements must be smaller than or equal to $a^*$, i.e. at least $2n$ elements from $J_{\max}$ are actually all in $(a^*-2t_w, a^*]$.
Together with $a^*$ we have at least $2n+1$ elements in $J_{\max} \sqcap (a^*-2t_w, a^*]$, an interval of length smaller than $2t_w$.
This contradicts with $J_{\max}$ being an independent set or base since our layout constraint in
Eq.~\eqref{eq:LayoutConstraint} dictates that we cannot have more
than $2n$ elements in an interval of that length
(cf. Figure~\ref{fig:proof_intuition} (c)). $\lightning$



Therefore, the invariant has to hold initially and after any number of iterations of \textsc{DeletionStep}.
\end{proof}

\begin{reptheorem}{theorem:layout_constraints_psystem}
Let $(E,\mathcal{I})$ be an independence family based on our layout constraint where $T \in \mathcal{I}$ if $T$ satisfies Equation~\eqref{eq:LayoutConstraint} ($T \subseteq E$).
$(E,\mathcal{I})$ forms a $p$-system for $p=2$.
\end{reptheorem}
\vspace{-2mm}
\begin{proof}
From Equation~\ref{eq:p_systems_property},
we need to show that $J_{\max}| / |J_{\min}| \leq 2.$

From Lemma~\ref{Lemma:DeletionInvariance} we know that the invariant holds initially.
We now recursively apply \textsc{DeletionStep}.
In each step we delete one element from $J_{\min}$ and up to two elements from  $J_{\max}$.
In the end we have $J_{\min}^\text{end} = \emptyset$ and $|J_{\max}^\text{end}| \geq |J_{\max}| - 2 |J_{\min}|$.

From Lemma~\ref{Lemma:DeletionInvariance} we know that the invariant still holds after any number of iterations.
Therefore, $J_{\min}^\text{end} = \emptyset$ implies $J_{\max}^\text{end} = \emptyset$ and we get $|J_{\max}^\text{end}| = 0 \geq |J_{\max}| - 2 |J_{\min}|$.
Because all this holds for arbitrary minimal and maximal bases $J_{\min}$ and $J_{\max}$
this is equivalent to our proposition: $|J_{\max}| / |J_{\min}| \leq 2.$
\end{proof}

This shows that we can obtain a close-to-optimal solution to Equation~\eqref{eq:timeline_optimization} using an efficient greedy algorithm (see Section~\ref{subsec:approximation_guarantee}).

\subsection{Implementation Details of the Candidate Generation Step}

The KB used in this paper, Freebase, uses Compound Value Type (CVT) nodes 
as a way to represent n-ary relations in triple form.
(In the semantic web community, CVTs are known as ``blank nodes''.\footnote{
See \url{http://en.wikipedia.org/wiki/Blank_node}.})
Such CVTs reify the n-ary relation itself as a node;
each property of this CVT node corresponds to one of the slots in the n-ary relation.
For example, the event that Robert Downey Jr. starred in {\em The Avengers} while playing the role of {\em Iron Man} is represented by the following triples:
\begin{eqnarray*}
/m/Robert  & \xrightarrow{/film/actor/film} & \texttt{CVT} \\
\texttt{CVT}  & \xrightarrow{/film/performance/character} & /m/IronMan\\
\texttt{CVT}  & \xrightarrow{/film/performance/film} & /m/Avengers
\end{eqnarray*}
This representation makes the movie more than one hop away from the actor,
and therefore makes the process outlined above a bit more complex.
To address this, we ``collapse'' the CVT nodes by replacing each path 
$a \xrightarrow{p_1} \texttt{CVT} \xrightarrow{p_2} b$ 
by a single edge 
$a \xrightarrow{p_1.p_2} b$. 

The second issue  that CVTs raise can 
be illustrated through the following example.
When a musician plays for a band, this event is represented by a CVT,
which captures the role he played (\eg, singer or drummer),
the name of the band,
and the date he joined.
If two musicians play for the same band, their corresponding CVTs will have different IDs.
This will break the above compound event heuristic.
We solve this issue by replacing the CVT ID by the ID of the corresponding band.
More generally, we replace a CVT ID by following the predicate that leads to the most diverse 
set of entities in the knowledge base (\eg, we use the band ID not the less diverse role ID).
Formally, for each ``incoming'' predicate $p_1$ we choose the ``outgoing'' predicate $p_2^*$ such that
$$
p_2^*(p_1) = \argmin_{p_2} \max_{b} |\{ b \;|\; \exists a,\texttt{CVT} : e = a \xrightarrow{p_1} \texttt{CVT} \xrightarrow{p_2} b \}|.
$$





\end{document}